\documentclass[11pt,fleqn]{article}
\usepackage[T1]{fontenc}
\usepackage[margin=1in]{geometry}
\usepackage{amsfonts,amsmath,amssymb,amsthm,mathtools,graphicx,setspace}
\usepackage[dvipsnames]{xcolor}
\usepackage[colorlinks=true,linkcolor=blue,citecolor=Mahogany,urlcolor=blue]{hyperref}
\usepackage[normalem]{ulem}
\usepackage[longnamesfirst]{natbib}
\usepackage{titling}
\usepackage{bbm}
\usepackage{enumitem}
\usepackage{etoolbox}
\usepackage[noamsmath]{kpfonts}
\usepackage{inconsolata}
\usepackage{natbib}

\definecolor{accentcolor}{RGB}{26,148,49}
\hypersetup{pdfborderstyle={},allcolors=accentcolor,colorlinks}
\numberwithin{equation}{section}
\allowdisplaybreaks
\setlist[enumerate]{topsep=3pt,itemsep=3pt,parsep=0pt}
\setlist[itemize]{topsep=3pt,itemsep=3pt,parsep=0pt}

\newcommand{\zerodisplayskips}{%
  \setlength{\abovedisplayskip}{8pt}%
  \setlength{\belowdisplayskip}{7pt}%
  \setlength{\abovedisplayshortskip}{8pt}%
  \setlength{\belowdisplayshortskip}{7pt}}
\appto{\normalsize}{\zerodisplayskips}
\appto{\small}{\zerodisplayskips}
\appto{\footnotesize}{\zerodisplayskips}

\newtheorem{theorem}{Theorem}
\newtheorem{proposition}{Proposition}
\newtheorem{corollary}{Corollary}
\newtheorem{lemma}{Lemma}
\newtheorem{definition}{Definition}
\newtheorem{assumption}{Assumption}

\newcommand{\R}{\mathbb R}

\newcommand{\Nzero}{\mathbb N_0}
\newcommand{\E}{\mathbf E}
\newcommand{\A}{\mathcal A}
\newcommand{\Ccal}{\mathcal C}
\newcommand{\Dcal}{\mathcal D}

\newcommand{\Kcal}{\mathcal K}
\newcommand{\Lcal}{\mathcal L}

\newcommand{\e}{\mathbf e}
\newcommand{\1}{\mathbf 1}

\DeclareMathOperator*{\argmax}{arg\,max}

\pretitle{\begin{flushleft}\LARGE\bfseries}
	\posttitle{\par\end{flushleft}\vspace{-0.5em}}

\preauthor{\begin{flushleft}\large}
	\postauthor{\par\end{flushleft}\vspace{-1em}}

\predate{\begin{flushleft}\normalsize}
	\postdate{\par\end{flushleft}}

\title{Schedules and Prioritization: A Behavioral Foundation for Multi-Armed Bandits and Stopping Problems}

\author{%
	Jaden Yang Chen\thanks{UNC Chapel Hill. Email: \href{mailto:yangch@unc.edu}{yangch@unc.edu}.}
	\quad
	Can Urgun\thanks{UNC Chapel Hill. Email: \href{mailto:curgun@unc.edu}{curgun@unc.edu}.}
}

\date{\today}

\begin{document}
\maketitle

\begin{abstract}
Bandit models typically begin with arms, states, rewards, and transition rules. This paper instead begins with preferences over stopped local contingent schedules: possible unfoldings of a responsibility, project, experiment, or opportunity in its own local time. Behavioral axioms on single schedules characterize a generalized stopping representation with current utility, local discounting, and a broad continuation aggregator. A common-tail compensation axiom then allows calendar time to be priced across schedules. Imposing a tight elapsed-calendar constraint generates a rested generalized bandit and yields index optimality: the index is the shadow price of advancing a local clock. Expected-utility, learning, robust, rank-dependent, Choquet, and Pandora models arise as special cases.

\vspace{2mm}
\noindent\textit{Keywords}: contingent schedules, recursive preferences, stopping, ambiguity, bandits, index policies.
\end{abstract}

\newpage
\pagenumbering{arabic}

\section{Introduction}

A person has many strands of responsibility in her life. There may be a work strand: answer a referee report, revise a draft, wait for feedback, decide whether the next step is a small edit or a larger rewrite. There may be a child-related strand: take a child to school, see how the school day went, check homework, learn from a report card that mathematics needs attention, decide whether to help directly or arrange outside help. There may be a household or caregiving strand: make dinner, schedule a repair, follow up with a family member, learn that a small task has generated a larger one. There may be a relationship strand: plan something thoughtful, respond to how it is received, and decide what kind of further attention is called for. These examples are deliberately ordinary. The point is that a strand of responsibility is not a single dated payoff. It is a contingent sequence of consequences.

The larger problem is to fit many such strands into one calendar. A day has only so many slots. Work, household, family, and relationship responsibilities cannot all be advanced at once. But neither are they consumed and discarded like independent goods. If a work file is put aside while a family matter is handled, the work file remains there. If a household repair is postponed while a deadline is met, the repair does not vanish. Each strand has its own internal time: the first time it is advanced, the second time it is advanced, the next time something is learned, and the point at which that strand is completed or abandoned. Calendar time is the scarce resource that determines which of these local clocks is allowed to move.

The reason to begin with these mundane strands is the same reason consumer theory begins with consumption bundles. Demand is a workhorse object of economics, but a demand function is not usually taken as primitive. One starts with preferences over the right domain, obtains a utility representation from behavioral restrictions, and then derives demand by imposing a budget constraint. The domain and the constraint do different jobs. Bundles are the objects the consumer can rank; the budget is the scarcity constraint that turns those rankings into demand.

This paper develops an analogous foundation for bandit problems. Bandits are used throughout economics to study experimentation, learning, search, attention, project selection, medical trials, and dynamic resource allocation. Yet the standard formulation usually begins one step downstream. The analyst gives the decision maker a collection of arms, specifies states, rewards, transition rules, and an objective, and then asks how the arms should be operated. That is often the right applied model. It is not, however, a behavioral foundation for why such a bandit problem is the right representation of the decision maker's situation.

Our starting point is that, in many economic settings, what the decision maker sees is not a Markov machine called an arm. She sees a responsibility, project, experiment, or opportunity that may unfold in different ways as attention is devoted to it. The primitive alternatives are therefore possible unfoldings of one strand of responsibility. We call these objects local contingent schedules. They are local because they are written in the internal time of the strand, before the strand is placed into a common calendar with other strands. They are contingent because later consequences may depend on what has happened so far. They are stopped because, along every realized branch, the strand eventually ends.

The first part of the paper studies one strand in isolation. Before asking whether work, family, household, or some other responsibility should receive the next unit of calendar time, we ask how the decision maker evaluates possible continuations of a single responsibility. We impose two groups of behavioral restrictions. The deterministic-stream axioms apply when the strand has no remaining branching. They identify the ordinary intertemporal part of the evaluation: a common scale for current consequences and a common discount factor over local time. The recursive one-step axioms apply when the strand has a current consequence and a contingent folder of future continuations. They require that equivalent continuations can be substituted, that branchwise improvements are respected, and that current consequences and continuation prospects can be traded off in a stable way.

The first main theorem says that these restrictions are equivalent to a recursive stopping representation. This is a strong equivalence, not merely a sufficient condition for a convenient formula. Recursive representations are fairly common in decision theory. The novelty is that, on our stopped-schedule domain, the usual behavioral restrictions generate stopping problems. The representation says that advancing a strand delivers some utility today, future consequences are discounted in local time, and the remaining uncertainty is evaluated by an aggregator over the possible continuation values. That aggregator is deliberately broad: in special cases it becomes expected utility, multiplier or variational preferences, rectangular max--min, rank-dependent or Choquet preferences, and other evaluations of future branches.

This is the first sense in which the paper is foundational. Stopping theory usually begins with a process, a payoff, and a set of admissible stopping rules. The analyst writes down what the state is, how it evolves, and what the decision maker receives upon stopping. Here, the stopped tree of consequences is the primitive alternative. The theorem identifies when preferences over those trees can be treated as if the decision maker were solving a stopping problem. Thus the stopping problem is not imposed at the outset. It is represented from preferences over stopped schedules.

This one-strand representation is the analogue of utility in consumer theory. It tells us how the decision maker evaluates each primitive object. But utility alone is not demand, and a stopping representation alone is not yet a bandit. The second step is to add the relevant scarcity constraint. For consumers, the constraint is a budget. For responsibility strands, the constraint is the calendar: only one local clock can move at a time.

Before calendar time can be priced, however, one more behavioral issue must be addressed. A priority index is a scalar object. It compares the claims that different strands have on the next unit of calendar time. For this comparison to be meaningful, scalar charges and compensations must pass through the evaluation of continuation folders without being distorted. This is automatic under ordinary expected utility, but it is not automatic when continuations are evaluated nonlinearly. Ambiguity, robustness, or rank dependence can make a common shift in all future branches fail to behave like the same common shift after aggregation.

We capture the needed property with a common-tail compensation axiom. Suppose two ways of handling the beginning of a strand are exactly balanced. Now append the same downstream continuation behind both of them. Since the appended continuation is common, it should not alter the original compensation. In the responsibility-strand language, if two ways of opening a file are equally good before the same future file is attached, attaching that same future file should not make one opening suddenly better than the other. The axiom is behavioral, but its economic role is to make scalar shadow prices legitimate inside local schedule problems.

This is the bridge from stopping to bandits. Once scalar calendar-time charges can be moved through the continuation evaluation, represented preferences over local schedules can be combined with a tight calendar constraint. The elapsed-calendar constraint says that local clocks cannot in aggregate run faster than calendar time. Each calendar period advances one strand, while the others remain fixed. This is the scarce-resource constraint that turns many local responsibility schedules into one prioritization problem.

The central result is the Bandit Representation and Index Theorem. Its first part says that the elapsed-calendar problem has the form of a generalized rested multi-armed bandit. The state of the decision maker is the collection of current local states, one for each open strand. Advancing one strand changes only that strand's local state; all other strands are frozen. The arms of the bandit are therefore not primitive Markov machines. They are represented responsibility strands placed under a common calendar constraint.

The second part says that the optimal policy is an index policy. The index of a strand at a given local state is the critical shadow price of calendar time at which that strand is just worth putting on hold. This is a priority price, not an additional behavioral primitive. A strand with a high index is one for which the next local step is expensive to postpone, because the decision maker would still be willing to advance it even when calendar time is priced highly. The optimal rule is to advance a strand with the highest current index. Thus index optimality is derived from the domain, the axioms, and the calendar constraint, rather than added as an independent rule.

The paper then shows how familiar models arise from additional behavioral or domain restrictions. Under ordinary independence for continuation schedules, the continuation aggregator is linear. The local schedule is then represented by a subjective transition law, and the index reduces to the classical Gittins--Jones index. A learning model is obtained by treating beliefs as part of the local state. Advancing a strand produces both a current consequence and a signal about what kind of strand it is. Resting the strand leaves its belief fixed, because no local signal arrives while attention is elsewhere.

With variational or multiplier-style preferences, the decision maker is not willing to evaluate a continuation folder under a single forecast. Instead, alternative probabilistic descriptions are considered, with penalties attached to less plausible descriptions. Max--min preferences are the stark limiting case. There, the relevant adverse description is not merely a single bad one-step transition chosen today. It is closer to a series of unfortunate events: an unfavorable rectangular description of how the strand may continue to evolve is selected against the contemplated stopping plan. The resulting index is a robust priority price for advancing the strand under dynamic ambiguity.

With rank-dependent or Choquet preferences, the decision maker evaluates continuation branches by their rank rather than only by additive probabilities. This is natural in the schedule domain. A work strand may command attention because a very promising response would create a valuable continuation; a school matter may command attention because a bad report would require immediate intervention; a household repair may loom large because one branch turns it into a serious problem. In a static lottery this looks like rank dependence over prizes. In a continuation problem it looks closer to fear of missing out: the cost of putting a strand on hold includes the possibility of failing to reach a salient future branch.

Pandora's problem arises in a different way. It is not primarily a different attitude toward continuation uncertainty. It is a restriction on the schedule domain. A box is a strand whose uncertainty is resolved in one local step. Once the box is opened, the revealed value is absorbing. Under this domain restriction, the general index becomes a reservation value. Thus Pandora's rule is another instance of the same schedule logic: a special kind of responsibility strand generates a special kind of priority price.

The unifying message is that bandit models can be built from the same ingredients that underlie other economic choice problems: a primitive domain, behavioral restrictions that yield a representation, and a scarcity constraint. A transition law, an ambiguity set, a capacity, or a reservation value is not the starting point. Each appears after adding further restrictions to preferences or to the schedule domain. Contingent schedules therefore play for bandits the role that consumption bundles play for demand: they are the primitive domain on which behavioral restrictions are imposed before scarcity is introduced.

The paper is organized as follows. The next subsection briefly reviews the related literature. Section~\ref{sec:model} defines local contingent schedules and the one-step composition operation. Section~\ref{sec:axioms} states the deterministic-stream and recursive one-step axioms and proves the recursive stopping representation. Section~\ref{sec:cash} introduces common-tail compensation and its scalar-charge implication. Section~\ref{sec:multi} establishes the Bandit Representation and Index Theorem. Section~\ref{sec:extensions} develops the subjective expected-utility, learning, variational, max--min, rank-dependent, Choquet, and Pandora special cases. Section~7 concludes. All proofs are relegated to the appendix.

\subsection{Literature}

Bandit models are one of the standard ways economists represent dynamic allocation under uncertainty. They appear in experimentation, learning, search, attention, project choice, medical trials, and other settings in which choosing an alternative today both produces a payoff and changes what can be learned or done tomorrow. Early economic applications include pricing under demand uncertainty and strategic experimentation \citep{Rothschild1974,BoltonHarris1999,KellerRadyCripps2005}; \citet{BergemannValimaki2008} surveys the experimentation literature. The classical Gittins--Jones theory gives the main solution result once the bandit problem has been specified: a discounted allocation problem with projects that evolve only when selected can be reduced to single-project stopping problems, and an optimal policy selects a project with maximal index \citep{GittinsJones1974,Gittins1979,Whittle1980,BerryFristedt1985,KatehakisVeinott1987,Weber1992,GittinsGlazebrookWeber2011}. These papers solve bandits once the relevant objects are already in place. The decision maker is handed arms governed by known stochastic processes, and in learning applications she is handed an information structure that determines which observations arrive when an alternative is tried and how beliefs update. Our question is prior to that specification. In the same way that revealed-preference theory disciplines demand by starting from preferences over a domain of objects rather than from an arbitrary demand curve \citep{Samuelson1938,Afriat1967,Varian1982}, we ask what primitive domain and what refutable behavioral restrictions make a dynamic choice problem representable as a bandit. In our model, the primitive objects are local stopped schedules; the bandit appears only after those schedules have a stopping representation and are placed under one elapsed-calendar constraint.

The same issue appears already at the level of stopping. Optimal stopping problems are ubiquitous in economics and finance: agents decide when to invest, exercise, liquidate, abandon, inspect, or accept an alternative. Classical stopping theory begins with a stochastic process and asks when it should be stopped, with Snell-envelope and martingale methods providing the standard tools \citep{ChowRobbinsSiegmund1971,Shiryaev1978,PeskirShiryaev2006}; real-options models use the same structure to study investment timing under uncertainty \citep{DixitPindyck1994}. Search models in the Pandora tradition combine ordering and stopping: \citet{Weitzman1979} gives the classical reservation-value rule, and subsequent work studies how far Pandora-style rules extend beyond the original environment \citep{OlszewskiWeber2015,Doval2018}. A particularly relevant recent contribution is \citet{AusterChe2025}, who revisit Pandora's problem under robust search and connect robust search behavior to choice overload. These papers begin with processes, boxes, inspection costs, or reservation equations. Our single-strand theorem begins instead with a stopped object in the decision maker's domain and identifies when preferences over such objects have a stopping representation. Pandora's problem then appears as a special schedule domain, in which opening a box resolves uncertainty in one local step and the revealed value is absorbing.

A further literature takes the bandit object as given and changes the way uncertainty is evaluated. \citet{Anderson2012} shows how ambiguity aversion can reduce the value of experimentation. \citet{Li2019} develops a multiple-priors bandit model and obtains Gittins--Jones-style results. \citet{CohenTreetanthiploet2022} prove a Gittins theorem under nonlinear expectations. Robust and risk-sensitive bandit models modify the objective, reward process, or uncertainty set, and identify conditions under which index-like rules survive or fail \citep{KimLim2016,CaroGupta2022,DenardoParkRothblum2007,MalekipirbazariCavus2024}. This literature is useful for us because it makes clear that the location of uncertainty is substantive. If ambiguity is placed on the full multi-project environment, the adverse model may choose a joint description of all projects. Projects that are technologically separate in the classical bandit can then become linked through the worst-case joint law. In our setting, this is not the natural interpretation of the underlying economic problem. A work obligation, a household repair, a medical follow-up, and a family responsibility may all compete for the same calendar, but uncertainty about one strand need not be tied to uncertainty about the others. We therefore allow rich ambiguity, robustness, and rank dependence within each local continuation problem, while keeping the cross-strand interaction exactly where the schedule interpretation puts it: in the scarce calendar. The product structure is not an independence assumption imposed on an already-formed multi-arm Markov process; it is the consequence of representing distinct local schedules separately before they are woven into one calendar.

Finally, the axiomatic part of the paper belongs to the tradition of dynamic decision theory and recursive preferences. Koopmans's work on stationary ordinal utility and impatience, and the Koopmans--Diamond--Williamson extension, provide classic foundations for stationary intertemporal utility and geometric discounting \citep{Koopmans1960,KoopmansDiamondWilliamson1964}. Kreps and Porteus study temporal lotteries and the timing of uncertainty resolution, while Epstein and Zin develop recursive, not-necessarily expected-utility preferences over intertemporal consumption lotteries \citep{KrepsPorteus1978,EpsteinZin1989}. Chew and Epstein axiomatize recursive utility under uncertainty and integrate atemporal non-expected-utility models into a dynamic framework \citep{ChewEpstein1991}. Related work shows how learning, information, updating, and ambiguity primitives can themselves be disciplined by choice: \citet{DillenbergerLlerasSadowskiTakeoka2014} study subjective learning when information arrival is unobserved by the analyst, \citet{DillenbergerKrishnaSadowski2023} model subjective constraints on information acquisition and absorption, \citet{Denti2022} gives testable restrictions for posterior-separable information costs, and \citet{Cheng2022} axiomatizes relative maximum-likelihood updating for ambiguous beliefs. For ambiguity and nonadditive beliefs, the relevant antecedents also include max--min expected utility, Choquet expected utility, variational preferences, recursive multiple priors, smooth ambiguity, multiplier preferences, dynamic mixture aversion, adaptive preferences, dynamic \(\alpha\)-MEU, and dual-self representations of ambiguity \citep{AnscombeAumann1963,GilboaSchmeidler1989,Schmeidler1989,MaccheroniMarinacciRustichini2006,MaccheroniMarinacciRustichini2006Dynamic,EpsteinSchneider2003,KlibanoffMarinacciMukerji2005,HansenSargent2001,HansenSargent2008,DentiPomatto2022,Sarver2018,SadowskiSarver2024,FrickIijimaLeYaouanq2022,ChandrasekherFrickIijimaLeYaouanq2022}. Our first representation theorem is deliberately in this recursive-preference tradition: a one-step schedule is evaluated by combining a current consequence with an aggregator of continuation values. The novelty is not recursion per se. The difference is the domain and the use of the representation. Standard recursive preferences are usually defined over global temporal consumption lotteries, streams, or acts; our primitives are stopped local schedules, or responsibility strands, whose clocks advance only when selected. This local-time domain is what makes a bandit interpretation possible. The common-tail axiom then supplies the central additional step: it implies cash additivity of continuation values, which allows elapsed calendar time to be priced by a scalar charge. When many local schedules share one calendar clock, these scalar charges become indices, and advancing a maximal-index strand is optimal.

\section{A Model of Contingent Schedules}\label{sec:model}

We start by introducing a local contingent schedule, which broadly corresponds to an ordered flow of consequences generated by a strand of responsibility. We use one notational convention throughout. Local states are treated as consequence-bearing states. Formally, a schedule is a map from local histories into one-period consequences, so the state and the assigned consequence could be kept as separate objects. For simplicity we suppress that extra projection. Thus, a local state records the payoff-relevant situation of the strand, including the current one-period consequence generated at that local situation. When later formulas write \(u(x)\) or \(u(x_i)\), this means the utility of the current consequence attached to that local state. This convention is only a relabeling.\footnote{D8 below fixes the common affine scale on which current consequences are compared; the objects that may remain state- and arm-dependent are the continuation operators.}

For a single strand, let \(S=\{1,\ldots,n\}\) be a finite local state space. For an initial state \(x\in S\), define
\[
H_t(x):=\{(x_0,\ldots,x_t)\in S^{t+1}:x_0=x\},
\qquad
H(x):=\bigcup_{t\ge0}H_t(x).
\]
The set \(H_t(x)\) is the set of possible histories after \(t\) local steps of the strand. For an infinite local history \(\omega=(x_0,x_1,\ldots)\) with \(x_0=x\), write \(\omega^t=(x_0,\ldots,x_t)\). Let \(\Kcal\) be the space of one-period consequences, and let \(\bot\) denote the terminal symbol indicating that the schedule has ended.

\begin{definition}[Local contingent schedule]\label{def:local-schedule}
A local contingent schedule at \(x\) is a map
\[
s:H(x)\to \Kcal\cup\{\bot\}
\]
satisfying two restrictions.
\begin{enumerate}[label=(\roman*)]
\item \emph{Absorbing nullity.} If \(s(h_t)=\bot\), then \(s(h_s)=\bot\) for every extension \(h_s\supset h_t\).
\item \emph{Finite termination.} For every infinite history \(\omega\) starting at \(x\), there exists a finite date \(T_s(\omega)\) such that \(s(\omega^t)\in\Kcal\) for \(t<T_s(\omega)\) and \(s(\omega^t)=\bot\) for \(t\ge T_s(\omega)\).
\end{enumerate}
\end{definition}

Let \(\A_x\) denote the set of schedules at \(x\). A schedule specifies what
consequence is received after each possible local history, until that branch
ends. Absorbing nullity says that an ended branch stays ended. Finite termination
says that, along each branch, only finitely many nonterminal consequences are
generated. This is the domain on which the primitive preference relation
\(\succeq_x\) is imposed.

The finite-termination convention is behavioral rather than ontological. The
axioms are imposed on stopped schedules because these are the finite objects over
which choices can in principle be observed, compared, and refuted. This is the
same discipline behind revealed-preference restrictions such as WARP: the
primitive restrictions are placed on finite choice objects, even if the
representation later assigns values on a larger mathematical domain. Thus the
restriction to stopped schedules should not be read as saying that the decision
maker cannot contemplate infinite streams or that the analyst cannot represent
their values. Once the representation is obtained, finite schedules determine a
natural extension to infinite deterministic streams and infinite-horizon
recursive problems. That extension is generated by discounted limits and by
concatenating continuation schedules at finite tails.


The point of writing the domain this way is that the uncertainty of a strand is not imposed on the decision maker as a fully specified Markov process or signal structure. What is primitive is the comparison between possible stopped trees of consequences. A work strand, for example, may branch depending on feedback; a child-related strand may branch depending on what is learned from school; a household strand may branch depending on whether a repair turns out to be minor or serious. The representation result asks when these comparisons can be summarized by familiar ingredients: a utility scale for current consequences, discounting across local time, and a continuation rule that captures both the possible next local situations and the decision maker's attitude toward the uncertainty in those continuations. In the expected-utility special case, that continuation rule is a subjective transition kernel; in the more general cases, it may be a nonlinear aggregator reflecting ambiguity, robustness, or risk-sensitive evaluation. Thus the usual stochastic process, transition kernel, or learning rule appears only after additional structure is imposed on schedule preferences, rather than being part of the primitive description \citep{Rothschild1974,BoltonHarris1999,KellerRadyCripps2005,BergemannValimaki2008}.

The one-period consequence space can be specialized according to the source of uncertainty. In the baseline formulation below, \(\Kcal=\Lcal:=\Delta(Z)\), the space of objective lotteries over prizes, and \(0\in\Lcal\) is a null lottery. If ambiguity also concerns one-period rewards, \(\Kcal\) can instead be taken to be a space of horse lotteries or subjective one-period acts. The recursive representation is stated on the lottery-flow specialization.

A deterministic finite stream is a tuple \(c=(\ell_0,\ldots,\ell_{T-1})\in\Lcal^T\). Let
\[
\Ccal:=\bigcup_{T\in\Nzero}\Lcal^T.
\]
The deterministic schedule \(d_x(c)\) assigns \(\ell_t\) to every history-date node at local date \(t<T\) and assigns \(\bot\) at all later local dates. Thus a deterministic schedule is the same sequence of consequences on every branch of the local tree. Let
\[
\Dcal_x:=\{d_x(c):c\in\Ccal\},
\]
and write \(d_x(())\) for immediate stopping. We normalize \(d_x(())\sim_x d_x((0))\) for every \(x\).

For each \(x\in S\), preferences over schedules starting at \(x\) are denoted by \(\succeq_x\), with symmetric and asymmetric parts \(\sim_x\) and \(\succ_x\). The subscript records the local state at which the strand is evaluated. The same formal continuation may be evaluated differently when it is reached from different local states, because the continuation is attached to a different local situation.

\subsection{One-step composition}

The recursive structure is built from a simple operation: isolate the current consequence of a strand from the continuation that follows the next local state. Given a current lottery \(\ell\in\Lcal\) and continuation schedules \(a^y\in\A_y\) for all next states \(y\in S\), define
\[
\Gamma_x(\ell,(a^y)_{y\in S})\in\A_x
\]
by assigning \(\ell\) at the initial history \((x)\) and, after a first transition to \(y\), following \(a^y\) from that point onward:
\[
\Gamma_x(\ell,(a^y)_y)(x,x_1,\ldots,x_t)=a^{x_1}(x_1,\ldots,x_t),
\qquad t\ge1.
\]
A one-step schedule therefore has two parts: a current consequence and a
contingent collection of future schedules, one for each possible next local
state. Repeated applications of \(\Gamma_x\) build finite local trees and attach
continuation schedules at their terminal histories. This is the local object from
which the recursive representation is built.

The first result below asks what restrictions on preferences over such schedules make this recursive description possible. The answer will be that a single strand can be evaluated as a stopping problem. Only after that one-strand problem is understood do we return to the larger calendaring problem: several strands of responsibility, each with its own local time, must be woven into a single elapsed calendar.

\section{Recursive Stopping Preferences}\label{sec:axioms}

We now put a behavioral/axiomatic structure on preferences over a single strand. At this point there is still no calendar problem. The decision maker is not yet choosing between work, household, family, or other strands. She is evaluating possible ways in which one strand may unfold: different stopped trees of consequences, all written in that strand's own local time.


The first group of axioms concerns deterministic strands. These are schedules with no branching: the same finite sequence of consequences is obtained along every history. They are useful because they pin down the ordinary intertemporal tradeoffs that are already present before uncertainty over the continuation tree enters, in the spirit of Koopmans-style foundations for discounted utility \citep{Koopmans1960,KoopmansDiamondWilliamson1964}. The second group of axioms concerns one-step contingent schedules. These ask when the decision maker's comparison between ``one consequence now'' and ``a contingent continuation from tomorrow onward'' can be summarized recursively, as in the recursive-preference tradition \citep{KrepsPorteus1978,EpsteinZin1989,ChewEpstein1991}.

\begin{assumption}[Deterministic-stream axioms D]\label{ass:D}
For each $x\in S$, the restriction of $\succeq_x$ to $\Dcal_x$ satisfies the following conditions.
\begin{enumerate}[label=\textup{D\arabic*.},leftmargin=1.25cm]
\item \emph{Weak order.} The relation is complete and transitive.
\item \emph{Mixture independence on common horizons.} For fixed $T$, streams $c,d,e\in\Lcal^T$, and $\alpha\in(0,1)$,
\[
d_x(c)\succeq_x d_x(d)
\quad\Longrightarrow\quad
 d_x\bigl(\alpha c+(1-\alpha)e\bigr)\succeq_x d_x\bigl(\alpha d+(1-\alpha)e\bigr),
\]
where mixtures are coordinatewise.
\item \emph{Same-length tail independence.} For fixed $T$, streams $c,d\in\Lcal^T$, and tails $h,k\in\Ccal$,
\[
d_x(c\oplus h)\succeq_x d_x(d\oplus h)
\quad\Longleftrightarrow\quad
 d_x(c\oplus k)\succeq_x d_x(d\oplus k).
\]
\item \emph{Stationarity.} For all $c,d\in\Ccal$,
\[
d_x(0\oplus c)\succeq_x d_x(0\oplus d)
\quad\Longleftrightarrow\quad
d_x(c)\succeq_x d_x(d).
\]
\item \emph{Zero-tail irrelevance.} For every $c\in\Ccal$, $d_x(c\oplus(0))\sim_x d_x(c)$.
\item \emph{Impatience.} There exist lotteries $\ell,m\in\Lcal$ such that $d_x((\ell))\succ_x d_x((m))$ and $d_x((\ell,m))\succ_x d_x((m,\ell))$.
\item \emph{Continuity.} For each $T$, the restriction to $\Lcal^T$ is continuous in the product topology.
\item \emph{State-independence on deterministic streams.} For all $x,y\in S$ and $c,d\in\Ccal$,
\[
d_x(c)\succeq_x d_x(d)
\quad\Longleftrightarrow\quad
d_y(c)\succeq_y d_y(d).
\]
\end{enumerate}
\end{assumption}

These assumptions say that when a strand has no remaining uncertainty, the decision maker evaluates its finite flow of consequences in the usual discounted-utility way. D1 gives coherent rankings. D2 is the source of affine utility on fixed-horizon lottery streams. D3 says that if two front pieces are compared before a common tail, the particular common tail should not change the ranking of those front pieces. D4 says that adding a common blank local period at the beginning does not change a comparison, which is what delivers stationarity across local dates. D5 says that appending a blank period after a schedule has ended is not a new consequence. D6 rules out indifference to timing and pins the discount factor below one. D7 is continuity. D8 is the cross-state normalization: deterministic consequence streams are evaluated on the same one-period scale no matter which local state the strand is written from.

\begin{proposition}[Discounted utility on deterministic finite streams]\label{prop:detstreams}
Under Assumption~\ref{ass:D}, there exist an affine vNM index $u:\Lcal\to\R$ and a discount factor $\beta\in(0,1)$ such that deterministic schedules are represented by
\begin{equation}\label{eq:det-rep}
U_x(d_x(\ell_0,\ldots,\ell_{T-1}))
=
\sum_{t=0}^{T-1}\beta^t u(\ell_t)
\qquad \text{for every }x\in S.
\end{equation}
The pair $(u,\beta)$ is unique up to positive affine rescaling of $u$.
\end{proposition}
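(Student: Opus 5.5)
The plan runs in four steps: first an affine utility on each common horizon, then additive separability, then geometric discounting, and finally the gluing of different horizons into one representation.

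\emph{Step 1: affine utility on each horizon.} Fix \(x\) and a horizon \(T\). The set \(\Lcal^T\) is a mixture space under coordinatewise mixtures. By D1, D2 and D7, the mixture space (vNM / Herstein--Milnor) theorem applies to the restriction of \(\succeq_x\) to \(\{d_x(c):c\in\Lcal^T\}\). This gives an affine function \(V_T:\Lcal^T\to\R\) representing it, unique up to positive affine transformation. Mixture continuity follows from product-topology continuity.

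\emph{Step 2: additive separability.} Affinity on a product of mixture spaces gives \(V_T(\ell_0,\ldots,\ell_{T-1})=\sum_{t}v_t^T(\ell_t)\) with each \(v_t^T\) affine. To see this, fix a reference point \(0\) in each coordinate and use \(c=\frac1T\sum_t T\cdot(\text{\(c\) at coordinate }t,\ 0\text{ elsewhere})\) via the midpoint decomposition. Next I use D3 with \(T=1\) and prefix streams of one lottery, together with D4 and D5. D5 identifies \(d_x(c)\) with \(d_x(c\oplus 0)\), so horizons can be padded with null lotteries and every comparison can be placed on a common long horizon \(T\). D3 then says the ranking of period-\(t\) lotteries, holding other coordinates fixed at \(0\), does not depend on the tail. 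D4 says the ranking induced on period \(t+1\) equals that on period \(t\). Hence every \(v_t^T\) represents the same vNM preference on \(\Lcal\), so \(v_t^T=\beta_t u+\gamma_t\) with \(\beta_t\ge0\) and a single affine \(u\). Normalize \(u(0)=0\) and \(\gamma_t=0\), and take \(\beta_0=1\). Nontriviality comes from D6: \(d_x((\ell))\succ d_x((m))\), so \(u\) is nonconstant.

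\emph{Step 3: geometric discounting.} I show \(\beta_{t+1}=\beta\beta_t\) for some \(\beta\). Stationarity D4 says the representation \(c\mapsto\sum_t\beta_{t+1}u(\ell_t)\) (the value of \(0\oplus c\)) and \(c\mapsto\sum_t\beta_t u(\ell_t)\) represent the same preference on \(\Lcal^T\). Both are affine, so by uniqueness in Step 1 they differ by a positive affine map. Since both vanish at the zero stream, they are proportional: \(\beta_{t+1}=\beta\beta_t\) with \(\beta=\beta_1\). Positivity of \(\beta\) needs care. If \(\beta=0\), then \(0\oplus c\sim 0\oplus d\) for all \(c,d\), and D4 would make everything indifferent, contradicting D6. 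D6's second clause gives \(u(\ell)+\beta u(m)>u(m)+\beta u(\ell)\), i.e.\ \((1-\beta)(u(\ell)-u(m))>0\), hence \(\beta<1\). Thus \(U_T(c)=\sum_t\beta^t u(\ell_t)\) on each horizon.

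\emph{Step 4: gluing horizons and states; uniqueness.} Comparisons across different lengths reduce to a common horizon by padding with zeros (D5), and the padded formula agrees because \(u(0)=0\). Transitivity plus D5 then show that this single \(U\) represents \(\succeq_x\) on all of \(\Dcal_x\). The normalization \(d_x(())\sim d_x((0))\) assigns value \(0\) to immediate stopping. D8 transfers the representation across states. The same \((u,\beta)\) works because the rankings coincide, and uniqueness (below) pins \(\beta\) state-independently. For uniqueness, any other representation of the form \eqref{eq:det-rep} with \((u',\beta')\) must, on \(\Lcal^1\), have \(u'=au+b\). Step 3's proportionality argument then forces \(\beta'=\beta\); the constant \(b\) is absorbed once the value of stopping is normalized.

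The main obstacle I expect is Step 2. Extracting a common \(u\) across coordinates needs D3 in a form that isolates a single coordinate, and D3 is stated only for prefixes. I would handle interior coordinates by combining D3 with D4: shift a period to the front using blank periods, apply D3, and shift back. I would also have to check that the \(\gamma_t\) constants vanish consistently with padding.
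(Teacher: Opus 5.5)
Your proposal is correct and follows the same route as the paper's much terser proof: a vNM representation on each fixed horizon, alignment of coordinates and horizons through D3--D5, geometric weights from stationarity, $\beta\in(0,1)$ from D6, and D8 for state independence, with your explicit mixture argument for additive separability filling in a step the paper leaves implicit. Two small precisions: the separability identity should read $\tfrac1T c+(1-\tfrac1T)\mathbf 0=\tfrac1T\sum_t c^{(t)}$, where $c^{(t)}$ agrees with $c$ in coordinate $t$ and is $0$ elsewhere; and in the uniqueness step the constant $b$ is not absorbed but forced to vanish because D5 requires $u'(0)=0$, so $u$ is really unique only up to a positive scalar multiple.
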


The proposition pins down the one-period utility scale and the local discount
factor on branchless schedules. We next turn to one-step schedules, where the
current consequence is followed by a contingent folder of continuation schedules. Let
\[
Q:=\prod_{y\in S}(\A_y/{\sim_y})
\]
be the product of continuation equivalence classes. For $[a]=([a^y])_{y\in S}\in Q$, define the induced one-step preference $\succeq_x^1$ on $\Lcal\times Q$ by
\[
(\ell,[a])\succeq_x^1(m,[b])
\quad\Longleftrightarrow\quad
\Gamma_x(\ell,(a^y)_{y\in S})\succeq_x \Gamma_x(m,(b^y)_{y\in S}).
\]
The quotient notation emphasizes the idea that once the first local step is taken, a continuation schedule matters through how it is evaluated from the next local state onward.

\begin{assumption}[Recursive one-step axioms R]\label{ass:R}
For each $x\in S$, one-step comparisons satisfy the following conditions.
\begin{enumerate}[label=\textup{R\arabic*.},leftmargin=1.25cm]
\item \emph{Conditional substitution of equivalents.} If $a^y\sim_y\tilde a^y$ for every $y$, then
\[
\Gamma_x(\ell,(a^y)_y)\sim_x\Gamma_x(\ell,(\tilde a^y)_y)
\qquad \text{for every }\ell\in\Lcal.
\]
\item \emph{Branchwise monotonicity.} If $a^y\succeq_y b^y$ for every $y$, then
\[
\Gamma_x(\ell,(a^y)_y)\succeq_x\Gamma_x(\ell,(b^y)_y)
\qquad \text{for every }\ell\in\Lcal.
\]
\item \emph{Weak order and continuity on one-step schedules.} The induced relation $\succeq_x^1$ is complete, transitive, and continuous on $\Lcal\times Q$.
\item \emph{Coordinate independence.} Current lotteries and continuation profiles are separately independent:
\begin{align*}
(\ell,q)\succeq_x^1(m,q)&\Longleftrightarrow(\ell,q')\succeq_x^1(m,q')
&&\forall \ell,m\in\Lcal,\ \forall q,q'\in Q,\\
(\ell,q)\succeq_x^1(\ell,q')&\Longleftrightarrow(m,q)\succeq_x^1(m,q')
&&\forall \ell,m\in\Lcal,\ \forall q,q'\in Q.
\end{align*}
\item \emph{Double cancellation.} If
\[
(\ell_1,q_1)\succeq_x^1(\ell_2,q_2)
\quad\text{and}\quad
(\ell_2,q_3)\succeq_x^1(\ell_3,q_1),
\]
then
\[
(\ell_1,q_3)\succeq_x^1(\ell_3,q_2).
\]
\item \emph{Solvability.} Intermediate one-step values can be matched either by changing the current lottery while holding a continuation fixed or by changing the continuation while holding a current lottery fixed: if
\[
(\ell_1,q)\succeq_x^1(\ell_2,q')\succeq_x^1(\ell_3,q),
\]
then $(\tilde\ell,q)\sim_x^1(\ell_2,q')$ for some $\tilde\ell\in\Lcal$; and if
\[
(\ell,q_1)\succeq_x^1(m,q_2)\succeq_x^1(\ell,q_3),
\]
then $(\ell,\tilde q)\sim_x^1(m,q_2)$ for some $\tilde q\in Q$.
\end{enumerate}
\end{assumption}

The R axioms say that the decision maker's treatment of one local step is internally consistent. R1 says that continuation schedules that are already equivalent from the relevant next state can be substituted branch by branch. R2 says that improving every possible continuation branch improves the one-step schedule. R3 gives coherent and continuous rankings on the one-step domain. R4 separates the ranking of current consequences from the ranking of continuation profiles: changing the common continuation should not reverse the ranking of two current consequences, and changing the common current consequence should not reverse the ranking of two continuation profiles. R5 is the corresponding consistency of compensations between current and future parts of the strand. R6 is the richness condition that lets those compensations be found inside the primitive domain.

The result is not that the strand was secretly an expected-utility Markov process all along. The result is that, under D and R, the decision maker behaves as if a single strand can be evaluated recursively: current utility is evaluated on the scale identified above, and the future tree is summarized by a continuation operator. That continuation operator may later reduce to a subjective transition law, a set of priors, a variational penalty, or another ambiguity-sensitive object; at this level it is simply the way the decision maker evaluates the possible continuations of the strand.

\begin{theorem}[Recursive stopping representation]\label{thm:recursive}
Suppose the interval-richness required for solvability holds on the relevant attainable sets. Then preferences satisfy Assumptions~\ref{ass:D} and~\ref{ass:R} if and only if there exist numerical representations $U_x:\A_x\to\R$, a common affine one-period utility index $u:\Lcal\to\R$, a common discount factor $\beta\in(0,1)$, and continuous increasing continuation operators
\[
\rho_x:\prod_{y\in S}U_y(\A_y)\to\R
\]
such that deterministic schedules satisfy \eqref{eq:det-rep} and every one-step schedule satisfies
\begin{equation}\label{eq:recursive-rep}
U_x\Bigl(\Gamma_x(\ell,(a^y)_{y\in S})\Bigr)
=u(\ell)+\beta\rho_x\bigl(U_1(a^1),\ldots,U_n(a^n)\bigr),
\end{equation}
where the one-period index is the same affine scale as in \eqref{eq:det-rep}.
\end{theorem}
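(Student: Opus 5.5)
We prove necessity first, since it is the easier direction, and then spend most of the effort on sufficiency.

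\textbf{Necessity.} Suppose the representation holds. Each $\succeq_x$ is represented by a real function $U_x$, so D1 and R3's weak-order part are immediate. D2--D8 follow from the additive form \eqref{eq:det-rep}, with $\beta\in(0,1)$ and $u$ affine and nonconstant. Impatience D6 uses $\beta<1$ with $u(\ell)>u(m)$. State-independence D8 holds because $u$ and $\beta$ do not depend on $x$. Zero-tail irrelevance D5 requires the normalization $u(0)=0$, which we impose; it is consistent with $d_x(())\sim_x d_x((0))$.

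R1 holds because $\rho_x$ depends on continuations only through $(U_y(a^y))_y$. R2 follows from monotonicity of $\rho_x$, and continuity follows from continuity of $u$ and $\rho_x$. The induced one-step relation is represented by the additively separable function $(\ell,q)\mapsto u(\ell)+\beta\rho_x(q)$ on $\Lcal\times Q$. Any such function satisfies coordinate independence R4 and double cancellation R5. Solvability R6 is exactly what the interval-richness hypothesis supplies: the range of $u$ is an interval, and the attainable continuation values fill intervals.

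\textbf{Sufficiency, step 1: fix the scale.} Proposition~\ref{prop:detstreams} gives a common affine $u$ and $\beta\in(0,1)$ representing deterministic schedules, and we normalize $u(0)=0$. For each $x$ we then build $U_x$ on all of $\A_x$. The key observation is that every schedule is, by finite termination, a finite tree, because $S$ is finite and König's lemma bounds the depth. We therefore proceed by induction on depth, writing each schedule of depth $T+1$ as $\Gamma_x(\ell,(a^y)_y)$ with each $a^y$ of depth at most $T$.

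\textbf{Step 2: the additive one-step representation.} On the one-step domain $\Lcal\times Q$, R3--R6 are the hypotheses of the two-factor additive conjoint measurement theorem. R4 gives independence, R5 gives double cancellation, R6 gives solvability, and continuity yields the Archimedean property. The theorem delivers functions $\phi_x$ on $\Lcal$ and $\psi_x$ on $Q$ such that $\phi_x(\ell)+\psi_x(q)$ represents $\succeq^1_x$, unique up to a common positive scaling and separate additive constants.

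We then align this representation with the deterministic one. Deterministic schedules are one-step schedules of the form $\Gamma_x(\ell_0,(d_y(c'))_y)$, and D8 makes the tail the same stream at every next state. On this subdomain the representation must agree ordinally with $u(\ell_0)+\beta\sum_t\beta^t u(\ell'_t)$. Fixing the continuation and using R4, $\phi_x$ and $u$ represent the same order on $\Lcal$. This order is continuous and affine-representable, and varying the tail along deterministic streams makes the additive structure force $\phi_x$ to be a positive affine transform of $u$. We rescale so that $\phi_x=u$. Then $\rho_x$ is defined on continuation profiles by $\beta\rho_x(q):=\psi_x(q)$, with the constant chosen so that $\rho_x$ of the zero profile equals $0$.

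\textbf{Step 3: express $\rho_x$ through next-state values.} Here we define $U_x(\Gamma_x(\ell,a)):=u(\ell)+\psi_x([a])$ and must show that $\psi_x$ depends only on $(U_y(a^y))_y$ and is increasing and continuous in them. R1 makes $\psi_x$ well defined on equivalence classes. Since $U_y$ represents $\succeq_y$, a class $[a^y]$ corresponds one-to-one to the number $U_y(a^y)$, so we may set $\rho_x(v_1,\dots,v_n):=\beta^{-1}\psi_x$ evaluated at any profile with those values. R2 gives weak monotonicity, and strict monotonicity follows from R2 combined with the uniqueness of the additive representation. Continuity of $\rho_x$ transfers from R3 through the continuity of the $U_y$.

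A further check is needed: we must verify that $U_x$ is consistent across the induction and coincides with \eqref{eq:det-rep} on $\Dcal_x$. This follows because for deterministic continuations $\rho_x(v,\dots,v)=v$. That identity holds by matching the two representations on $\Dcal_x$ together with zero-tail irrelevance.

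\textbf{Main obstacle.} The delicate step is the alignment in Step 2: showing that the conjoint scale $\phi_x$ is affinely $u$ rather than merely ordinally equivalent to it, and doing so simultaneously for all $x$ so that one common $u$ and one common $\beta$ appear in \eqref{eq:recursive-rep}.

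I would handle this by restricting the one-step representation to deterministic two-period streams $(\ell,m)$. On that subdomain both $\phi_x(\ell)+\psi_x(m)$ and $u(\ell)+\beta u(m)$ are additive representations of the same order. Uniqueness in conjoint measurement then forces $\phi_x=\alpha u+\text{const}$ and $\psi_x=\alpha\beta u+\text{const}$ on those continuations, with the same $\alpha>0$ in both. Renormalizing by $\alpha$ fixes the scale, and D8 makes the result state-independent. Everywhere I use solvability I would invoke the interval-richness hypothesis explicitly, since it is needed there.
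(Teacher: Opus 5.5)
Your proposal is correct and follows essentially the paper's route: additive conjoint measurement on $\Lcal\times Q$ from R3--R6, the current coordinate pinned to $u$ through deterministic two-date schedules (you spell out the uniqueness argument the paper only asserts), $\rho_x$ defined by reading $\beta^{-1}\psi_x$ through the $U_y$'s, and the converse from additive separability; the depth induction is harmless but unnecessary, since $\psi_x$ is already defined on all of $Q$. One caution: your claim that strict monotonicity follows from R2 and uniqueness is false (e.g.\ $\rho_x(v)=\min_y v_y$ satisfies D and R, as in the paper's own max--min case), but the theorem only requires the weak monotonicity that R2 delivers.
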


The theorem is the first main reduction. Primitive preferences over trees of consequences are equivalent to a recursive evaluation of a single schedule. The current consequence enters through the common utility index \(u\). The continuation enters only through the vector of continuation values and the state-dependent aggregator \(\rho_x\). Thus the theorem separates what is common across different strands of responsibility and states, namely the utility scale for current consequences, from what may vary with the local schedule corresponding to a strand of responsibility, namely the evaluation of continuation uncertainty.

The theorem can be read as a schedule-accounting result. The proof first fixes the
ruler for a single strand when nothing is uncertain. For a branchless local file, the D axioms say
that the value cannot depend on irrelevant descriptions of the same stopped schedule, only on
which consequences occur at which local dates. They therefore calibrate every deterministic local
schedule on the common discounted scale in Proposition \ref{prop:detstreams}. This is the calibration that must also be used once the strand becomes uncertain.  

Now consider a one-step uncertain schedule. In the schedule language, it is a card with two
fields: the consequence produced when the file is advanced now, and the contingent folder of
schedules left for the next local state. Rather than adding substantive assumptions about probabilities, the R axioms impose a bookkeeping discipline on this card. They ensure that equivalent continuation folders can be substituted, that branchwise improvements are respected, and
that current consequences and continuation folders have stable compensating tradeoffs. In other
words, the current field and the continuation field can be separated without changing the underlying ordering of schedules. That separation is exactly what is needed for an additive representation of a one-step card:
\[
A_x(\ell)+B_x\bigl((a^y)_y\bigr).
\]
The deterministic representation then identifies the first coordinate. If all continuations are deterministic one-date schedules, the representation must agree with the already-calibrated value of a two-date deterministic schedule; therefore \(A_x\) is the same affine consequence index \(u\), and the continuation coordinate is on the \(\beta\)-discounted scale. What remains unrestricted is how local state \(x\) summarizes the vector of possible continuation values. Calling that summary \(\rho_x\) gives
\[
U_x\bigl(\Gamma_x(\ell,(a^y)_y)\bigr)
= u(\ell)+\beta\rho_x\bigl((U_y(a^y))_{y\in S}\bigr).
\]
Monotonicity and continuity of \(\rho_x\) are inherited from the corresponding schedule axioms. Thus the axioms turn a primitive preference over local schedules into a recursive local accounting
rule: score what the strand delivers now, then add a discounted assessment of the file that remains.


The representation also explains why the word ``stopping'' is the right one, but
it should be read locally. Classical stopping theory typically begins with a
stochastic process and an admissible class of stopping times, and then asks when
the process should be stopped
\citep{ChowRobbinsSiegmund1971,Shiryaev1978,PeskirShiryaev2006}. Here the order
is reversed. The primitive object is a local schedule, and the axioms D and R are
\emph{equivalent} to a generalized stopping representation of preferences over
such schedules. Stopping does not mean that the person has literally abandoned
the real-world responsibility, or that no other strand can be attended to. It
means that, in the isolated one-strand problem, the local schedule is no longer
advanced and no further consequences from that schedule are counted. Later, in
the multi-schedule problem, not advancing a strand at a calendar date is different
from stopping it: the calendar may simply be used on another strand.

For a finite history \(h=(x_0,\ldots,x_t)\), write $	\bar x(h):=x_t$
for its current local state. If
\(g=(x_t,y_1,\ldots,y_m)\in H_m(x_t)\) is a continuation history starting from
\(\bar x(h)\), write
$
	h\circ g:=(x_0,\ldots,x_t,y_1,\ldots,y_m)
$
for the history obtained by grafting \(g\) after \(h\). In particular, for
\(y\in S\), write $hy:=(x_0,\ldots,x_t,y)$.
Let $H(x):=\bigcup_{t\ge0}H_t(x)$
be the set of finite local histories starting from \(x\).

\begin{corollary}[Induced stopping representation]\label{cor:induced-stopping}
	Assume the hypotheses of Theorem~\ref{thm:recursive}, and use the normalization
	\(u(0)=0\). Under the consequence-bearing-state convention, the completed
	one-strand stopping value starting from \(x\) is represented by the minimal
	bounded nonnegative function \(V_x:H(x)\to\R_+\) satisfying
	\begin{equation}\label{eq:history-snell}
		V_x(h)
		=
		\max\left\{
			0,\,
			u(\bar x(h))
			+
			\beta
			\rho_{\bar x(h)}
			\left(
				\bigl(V_x(hy)\bigr)_{y\in S}
			\right)
		\right\}.
	\end{equation}
	Equivalently,
	\begin{equation}\label{eq:stopping-sup}
		V_x((x))
		=
		\sup_{\tau\le\infty}\mathcal U_x(\tau),
	\end{equation}
	where \(\tau\) ranges over local stopping rules and
	\(\mathcal U_x(\tau)\) is the completed recursive value of the schedule that
	advances the local history before \(\tau\) and assigns \(\bot\) thereafter.
\end{corollary}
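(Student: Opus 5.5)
The plan is to build $V_x$ by value iteration on a monotone Bellman operator, in the style of a Snell envelope. Three properties of the representation in Theorem~\ref{thm:recursive} will be used throughout: $u(0)=0$, $\beta\in(0,1)$, and each $\rho_y$ is continuous and increasing.

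\emph{Step 1: finite-horizon values.} For $N\ge0$ let $\mathcal U_x(\tau)$ be defined first for stopping rules bounded by $N$, i.e. for stopped schedules in $\A_x$. By repeated application of \eqref{eq:recursive-rep}, the value of a schedule obtained by grafting continuations at the terminal histories of a depth-$N$ tree equals the backward recursion of $u(\bar x(h))+\beta\rho_{\bar x(h)}(\cdot)$. The value at a stopped node is $U(\bot\text{-schedule})=U_x(d_x(()))=U_x(d_x((0)))=u(0)=0$, by the normalization and \eqref{eq:det-rep}.

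Define $V^0\equiv0$ and
\[
V^{N+1}(h)=\max\bigl\{0,\,u(\bar x(h))+\beta\rho_{\bar x(h)}\bigl((V^N(hy))_y\bigr)\bigr\}.
\]
Induction on $N$ gives $V^N((x))=\max_{\tau\le N}\mathcal U_x(\tau)$. The inequality ``$\ge$'' follows because the max over the current decision dominates any particular rule, with monotonicity of $\rho$ passing the inductive bound through. The inequality ``$\le$'' follows because the rule that continues exactly when the second term attains the max is a legitimate local stopping rule, and it achieves the value.

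There is one technical point here. $\rho_y$ is only defined on $\prod_z U_z(\A_z)$, so I need the vector $(V^N(hy))_y$ to lie in that range. It does, since each $V^N(hy)$ is itself the value of an attainable schedule.

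\emph{Step 2: monotone limit.} Since $V^1\ge V^0$ and the Bellman operator $T$ is monotone, $V^N$ is nondecreasing in $N$. A uniform bound comes from $u$ being bounded on the finite consequence set attached to $S$ (states are consequence-bearing). It remains to bound $\rho$. I plan to prove $\rho_y(v)\le\max_z v_z$ by comparing with deterministic continuations through \eqref{eq:det-rep} and R2: a constant vector $c$ is realized by a common deterministic continuation, whose one-step value matches \eqref{eq:det-rep}. This gives $\rho_y(c\mathbf 1)=c$ on the attainable range, and then monotonicity yields the bound. Hence $0\le V^N\le \bar u/(1-\beta)$.

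Let $V=\lim V^N$. Continuity of $\rho$ lets me pass to the limit inside $T$, so $V$ solves \eqref{eq:history-snell}. Here I must extend $\rho$ continuously to the closure of the attainable range, which should be an interval by the interval-richness hypothesis. I will interpret this extension as the meaning of ``completed'' value.

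\emph{Step 3: minimality and the sup formula.} If $W\ge0$ is any bounded solution, then $W\ge V^0$, and monotonicity of $T$ gives $W=TW\ge TV^N=V^{N+1}$ by induction, so $W\ge V$. This proves minimality.

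For \eqref{eq:stopping-sup}, define $\mathcal U_x(\tau)$ for unbounded $\tau$ as the limit of the values of the truncations $\tau\wedge N$; these converge by the same discounting bound. Then $\sup_\tau\mathcal U_x(\tau)\ge\sup_N\max_{\tau\le N}\mathcal U_x(\tau)=V((x))$. Conversely, $\mathcal U_x(\tau)=\lim_N\mathcal U_x(\tau\wedge N)\le\lim V^N((x))$.

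\emph{Main obstacle.} I expect the hardest part to be the boundedness and continuity of $\rho$ outside the attainable range. This is Step 2: establishing translation and constant-preservation of $\rho$ on constants (needed for the limit values), and proving that $\mathcal U_x(\tau)$ for infinite $\tau$ is well defined.
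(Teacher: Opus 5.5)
Your proposal follows the paper's proof essentially step for step. It uses value iteration from $V^0\equiv 0$, monotonicity of the operator to get an increasing sequence, the constant normalization $\rho_y(c\mathbf 1)=c$ with monotonicity for the bound $\bar u/(1-\beta)$, continuity of $\rho$ to pass to the limit, monotone comparison for minimality, and finite-depth truncations for the supremum formula.

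One caution: the claim that the truncation values $\mathcal U_x(\tau\wedge N)$ converge ``by the same discounting bound'' is not justified as stated. Without R7 the operators $\rho_y$ need not be nonexpansive, so $\beta$ alone does not make depth-$N$ changes vanish at the root. Nothing essential is lost, however: the paper simply stipulates the completed value as this limit, and the supremum identity only needs $\mathcal U_x(\tau\wedge N)\le V^N((x))\le V_x((x))$ together with exactness on finite-depth rules.
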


Thus Theorem~\ref{thm:recursive} turns the local evaluation of a strand into a
generalized Snell envelope. At each finite local history, the decision maker
compares ending the schedule, with normalized value zero, to advancing the
schedule one more local step, receiving the consequence carried by the current
local state, and recursively evaluating the continuation folder indexed by the
next local state. The statement uses no exogenous probability space: the
continuation folder is evaluated by the represented operator
\(\rho_{\bar x(h)}\).

\section{Common Tails and Scalar Charges}\label{sec:cash}

The preceding section assigns a value to the schedule associated with a single strand of responsibility. This is already useful locally: it says how the decision maker evaluates one stopped tree of consequences when that schedule is considered on its own. The calendaring problem, however, will require something more. Several schedules will compete for the same unit of elapsed calendar time, and the value of moving one schedule forward must be comparable to the value of moving another schedule forward.

D8 gives a common scale for current consequences. It does not by itself say that scalar changes in continuation value remain scalar changes after the continuation tree is evaluated. This distinction matters because the representation in Theorem~\ref{thm:recursive} permits nonlinear continuation operators. A continuation tree may be evaluated through expected utility, max--min preferences, variational preferences, smooth ambiguity, or some other uncertainty-sensitive rule. In such environments, a number added to every possible continuation branch need not automatically remain that same number after the continuation operator is applied.

For the index argument, the needed invariance is exactly this scalar portability. If one unit of calendar time is eventually priced by a scalar charge, then that scalar must enter a one-schedule stopping problem as a scalar. It cannot be reshaped by the continuation operator or depend on the rest of the calendar. The next axiom gives the primitive content of this requirement.

\begin{assumption}[Common-tail compensation invariance, R7]\label{ass:R7}
	For every $x\in S$, lotteries $\ell,m,\bar\ell,\bar m\in\Lcal$, and continuation profiles
	\[
	r_y=(r_y^z)_{z\in S}\in\prod_{z\in S}\A_z,
	\qquad y\in S,
	\]
	if
	\begin{equation}\label{eq:R7-premise}
		d_x((\ell,\bar m))\sim_x d_x((m,\bar\ell)),
	\end{equation}
	then
	\begin{equation}\label{eq:R7-conclusion}
		\Gamma_x\bigl(\ell,(\Gamma_y(\bar m,r_y))_{y\in S}\bigr)
		\sim_x
		\Gamma_x\bigl(m,(\Gamma_y(\bar\ell,r_y))_{y\in S}\bigr).
	\end{equation}
\end{assumption}

The axiom is easiest to read as a compensation test. In the two-date comparison \eqref{eq:R7-premise}, the decision maker is exactly indifferent between two ways of distributing consequences across today and tomorrow. One schedule gives $\ell$ today and $\bar m$ tomorrow; the other gives $m$ today and $\bar\ell$ tomorrow. Thus the move from $\bar m$ to $\bar\ell$ is precisely the tomorrow-compensation for the move from $\ell$ to $m$ today.

R7 says that this compensation remains exact when the same contingent continuation is attached behind tomorrow. The common continuation may be uncertain. It may branch in many ways. It may be evaluated nonlinearly. But it is the same continuation in both schedules, and so it should not change the exchange rate between current utility and a common one-period-later compensation. In the responsibility-strand language, if two ways of handling the first two local steps of a schedule are exactly balanced, then appending the same downstream continuation of the strand should not disturb that balance.

\begin{proposition}[Common tails and cash additivity]\label{prop:cash}
	Assume the recursive representation in Theorem~\ref{thm:recursive}. Fix $x\in S$. R7 is equivalent to cash additivity of $\rho_x$ on all behaviorally attainable common scalar changes. In particular, whenever \eqref{eq:R7-premise} holds and
	\[
	a^y:=\Gamma_y(\bar m,r_y),
	\qquad
	b^y:=\Gamma_y(\bar\ell,r_y),
	\qquad
	v_y:=U_y(a^y),
	\qquad
	c:=u(\bar\ell)-u(\bar m),
	\]
	one has
	\begin{equation}\label{eq:cash-attainable}
		\rho_x(v+c\1)=\rho_x(v)+c.
	\end{equation}
	Consequently, if the attainable continuation domain is closed under common scalar translations, then R7 is equivalent to full cash additivity on that domain:
	\begin{equation}\label{eq:cash-full}
		\rho_x(v+c\1)=\rho_x(v)+c,
	\end{equation}
	whenever both $v$ and $v+c\1$ lie in the domain of $\rho_x$.
\end{proposition}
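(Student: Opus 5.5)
The argument is a direct translation of the two indifferences in R7 into the recursive representation of Theorem~\ref{thm:recursive}, followed by a check that every attainable common scalar change arises this way.

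\textbf{Translating the premise and conclusion.} By Proposition~\ref{prop:detstreams}, the premise \eqref{eq:R7-premise} reads
\[
u(\ell)+\beta u(\bar m)=u(m)+\beta u(\bar\ell),
\qquad\text{that is,}\qquad
u(\ell)-u(m)=\beta c,
\]
with $c=u(\bar\ell)-u(\bar m)$. Apply \eqref{eq:recursive-rep} at each next state $y$. This gives
\[
U_y(b^y)=u(\bar\ell)+\beta\rho_y\bigl(U(r_y)\bigr)=v_y+c,
\]
so the continuation value vectors of the two sides of \eqref{eq:R7-conclusion} are exactly $v$ and $v+c\1$. This holds branch by branch, for every $y$, with the same $c$, because the current lotteries $\bar m,\bar\ell$ do not depend on $y$ and the tails $r_y$ are common. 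Applying \eqref{eq:recursive-rep} once more at $x$, the conclusion \eqref{eq:R7-conclusion} becomes
\[
u(\ell)+\beta\rho_x(v)=u(m)+\beta\rho_x(v+c\1).
\]
Substituting the premise, this is equivalent to $\rho_x(v+c\1)=\rho_x(v)+c$.

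\textbf{Both directions on the attainable set.} Under R7, the computation yields \eqref{eq:cash-attainable} for every configuration that satisfies the premise. Conversely, suppose cash additivity holds on these configurations. Reversing the algebra shows that the two sides of \eqref{eq:R7-conclusion} have equal $U_x$-value, so they are indifferent. Since $U_x$ represents $\succeq_x$, the equivalence is exact. I would define the set of ``behaviorally attainable common scalar changes'' precisely as the set of pairs $(v,c)$ generated in this way, so that the first claim becomes a tautology once the translation is done.

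\textbf{Passing to full cash additivity.} This is the main obstacle. Given $v$ and $v+c\1$ in the domain of $\rho_x$, I have to realize them by data $(\ell,m,\bar\ell,\bar m,r_y)$ satisfying the premise.

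I would first realize each $v_y$ as $U_y(\Gamma_y(\bar m,r_y))$ for a suitable $\bar m$ and tail profile $r_y$. This uses the solvability axiom R6 together with the stated interval-richness hypothesis. If $U_y$'s range is generated by one-step schedules, any attainable value has this form.

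Next, choose $\bar\ell$ with $u(\bar\ell)=u(\bar m)+c$. Such a $\bar\ell$ exists because $u$ is affine on the convex set $\Lcal$, and I would assume the translation-closed domain keeps these values inside $u(\Lcal)$. If the required shift is too large for a single step, I would split $c$ into small pieces and chain the resulting identities. Each intermediate vector remains in the domain by the closure assumption, and the identities compose additively.

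Finally, pick $\ell,m$ with $u(\ell)-u(m)=\beta c$. Impatience D6 makes $u$ nonconstant, and affinity then supplies lotteries attaining such a difference, again subdividing if necessary.

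With these data the premise holds, and the attainable-case identity delivers \eqref{eq:cash-full}. The converse direction of the full statement follows as in the attainable case.
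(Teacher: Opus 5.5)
Your first two steps are exactly the paper's proof. The premise becomes $u(\ell)-u(m)=\beta c$, the representation at each $y$ gives $U_y(b^y)=v_y+c$, the two sides of \eqref{eq:R7-conclusion} are worth $u(\ell)+\beta\rho_x(v)$ and $u(m)+\beta\rho_x(v+c\1)$, and the algebra runs backward for the converse. The trouble is your third step. The paper makes no realization argument there. It reads closure of the attainable continuation domain under common translations as saying that every pair $v$, $v+c\1$ in that domain already comes from some R7 configuration. Then \eqref{eq:cash-full} is simply \eqref{eq:cash-attainable} restated on that domain.

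Your attempt to build such configurations from richness fails at its first move, for the reason you used yourself in the first step: R7 has a single pair $\bar m,\bar\ell$ on every branch. You therefore need $v_y=u(\bar m)+\beta\rho_y\bigl(U(r_y)\bigr)$ for all $y$ simultaneously, with one $\bar m$. Each $\rho_y$ is increasing and fixes constants. So once $\bar m$ is fixed, every coordinate lies in $[u(\bar m)+\beta\inf U,\,u(\bar m)+\beta\sup U]$, an interval of length only $\beta(\sup U-\inf U)$. When $u$ is bounded on $\Lcal$ (e.g.\ $Z$ finite), write $\bar u:=\sup_{\Lcal}u$ and $\underline u:=\inf_{\Lcal}u$, so $\sup U=\bar u/(1-\beta)$ and $\inf U=\underline u/(1-\beta)$. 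Take a vector with $v_{y_1}$ within $\varepsilon$ of $\sup U$ and $v_{y_2}$ within $\varepsilon$ of $\inf U$; both values lie in the domain, since deterministic streams approach these bounds at every state. Realizing it would require $u(\bar m)\ge\bar u-\varepsilon$ and $u(\bar m)\le\underline u+\varepsilon$ at once, which is impossible for small $\varepsilon$ because D6 gives $\bar u>\underline u$.

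R6 at each $y$, and the fact that every $U_y$-value is the value of a one-step schedule, give realizability coordinate by coordinate, with first lotteries that depend on $y$. They do not give it jointly. Your chaining step has the same defect: each intermediate vector must be re-realized with a common first lottery that leaves room for the next increment, not merely lie in the domain. Since R7 only ever involves $\rho_x$ at jointly realizable vectors, it says nothing about $\rho_x$ off that band. Under your reading, where the domain is all of $\prod_y U_y(\A_y)$, the final clause cannot be reached. Drop the construction and read the clause as the paper does, with ``that domain'' meaning the attainable continuation domain itself. The closure hypothesis then makes the clause immediate.
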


R7 is the axiom that makes schedule values behave like scalar amounts when a common tail is attached. First ignore branching. The premise \eqref{eq:R7-premise} says that two two-step schedules are exactly balanced: doing \(\ell\) now and \(\bar m\) next is just as good as doing \(m\) now and \(\bar\ell\) next. Since deterministic schedules have already been put on the common discounted ruler, this balance means
\[
u(\ell)+\beta u(\bar m)=u(m)+\beta u(\bar\ell).
\]
If \(c:=u(\bar\ell)-u(\bar m)\), then replacing \(\bar m\) by \(\bar\ell\) at the next local step is a \(c\)-unit improvement on the continuation-value scale, and the current consequence \(\ell\) is exactly the discounted compensation for using \(m\) instead.

Now put uncertainty back into the schedule by attaching the same downstream file behind that second local step. In every possible next local state \(y\), compare the continuation schedule that begins with \(\bar m\) and then follows \(r_y\) with the one that begins with \(\bar\ell\) and then follows the same \(r_y\). The only difference is the first consequence of that continuation file; the rest of the file is identical. Therefore the whole future folder has been shifted up by \(c\) in every branch, without changing its downstream structure.

R7 says that this common downstream structure cannot disturb the original compensation. Substituting the recursive representation into the conclusion of R7 gives indifference exactly when
\[
u(m)+\beta c+\beta\rho_x(v)
=
u(m)+\beta\rho_x(v+c\1),
\]
or equivalently
\[
\rho_x(v+c\1)=\rho_x(v)+c.
\]
Thus the common-tail axiom is precisely cash additivity on the scalar shifts that can be generated by primitive schedules. The converse is the same schedule story read backward: if a common \(c\)-shift passes through \(\rho_x\) unchanged, then the original two-step compensation remains exact after the same continuation file is appended. If the continuation domain contains all common scalar translations, the attainable version becomes full cash additivity.

\medskip

This proposition is the bridge from one-schedule stopping values to calendar-time charges. Theorem~\ref{thm:recursive} and Corollary~\ref{cor:induced-stopping} say that a schedule associated with one strand has a scalar stopping value. R7 says that scalar shifts of such values pass through the recursive continuation operator without distortion. That is the property needed when many local schedules are later placed into one calendar and elapsed calendar time is priced by a scalar charge.

In the classical index literature, especially Weber's charge formulation and Whittle's related subsidy formulation, the charge is not a new preference primitive; it is a scalar price attached to the local decision to continue or to leave a schedule inactive \citep{GittinsJones1974,Whittle1980,Weber1992,GittinsGlazebrookWeber2011}. Cash additivity is what lets that scalar remain a scalar inside each one-schedule stopping problem. Without it, local stopping representations may still exist, but a one-dimensional index need not summarize how different schedules should compete for calendar time.

\section{Bandit Representation and the Index}\label{sec:multi}\label{sec:index}

We now return to the larger calendaring problem. The previous sections studied the schedule associated with a single strand of responsibility. They asked how one local schedule is evaluated, when it is worth advancing in its own local time, and when scalar changes in its continuation value remain scalar. The real allocation problem begins when several such schedules are open at once. Work, household, family, and other strands may each have their own local state and their own local clock, but they must be fit into one elapsed calendar.

The passage from one schedule to many schedules adds a feasibility constraint, not a new preference axiom. Each local schedule is still evaluated by the one-schedule representation above. What changes is that local time is now scarce across schedules. A unit of calendar time can advance one local schedule; it cannot advance all of them at once. Thus the role played by a budget in a consumer problem is played here by an elapsed-calendar-time constraint: it describes which collections of local schedule steps can be afforded by a single calendar.

Fix \(N\) local schedules, indexed by \(i=1,\ldots,N\). In the bandit terminology below, each local schedule is an arm. Schedule \(i\) has finite local state space \(S_i\). The product state space is
\[
\mathbf S:=\prod_{i=1}^N S_i,
\qquad
x=(x_1,\ldots,x_N).
\]
For \(x\in\mathbf S\) and \(y_i\in S_i\), write
\[
(x_{-i},y_i):=(x_1,\ldots,x_{i-1},y_i,x_{i+1},\ldots,x_N).
\]
The product construction is rested: if schedule \(i\) is advanced, then only \(x_i\) changes. All other local states remain fixed.

The calendar constraint is written in terms of local times. For each \(n\in\Nzero\), set
\[
G^n
:=
\left\{
a=(a_1,\ldots,a_N)\in\Nzero^N:
\sum_{i=1}^N a_i=n
\right\},
\qquad
G:=\bigcup_{n\ge0}G^n.
\]
For \(a\in G^n\), let
\[
D_a:=\{a+\e_i:i=1,\ldots,N\}\subseteq G^{n+1}.
\]

\begin{definition}[Increasing path and elapsed calendar constraint]\label{def:increasing-path}
	An increasing path is a sequence \(\gamma=(\gamma_n)_{n\ge0}\) with values in \(G\) such that
	\[
	\gamma_0=0,
	\qquad
	\gamma_{n+1}\in D_{\gamma_n}
	\quad\text{for every }n\ge0.
	\]
	Writing \(\gamma_n=(T_1(n),\ldots,T_N(n))\), an increasing path satisfies the \textbf{elapsed calendar time constraint} if
	\begin{equation}\label{eq:clock-identities}
		T_i(0)=0,
		\qquad
		T_i(n+1)-T_i(n)\in\{0,1\},
		\qquad
		\sum_{i=1}^N\bigl(T_i(n+1)-T_i(n)\bigr)=1.
	\end{equation}
\end{definition}

Here \(T_i(n)\) is the amount of local time assigned to schedule \(i\) by calendar date \(n\). Equation \eqref{eq:clock-identities} says that each calendar period advances exactly one local schedule. Equivalently,
\[
\sum_{i=1}^N T_i(n)=n
\qquad\text{for every calendar date }n.
\]
This is the calendar-budget identity. The local schedules may have different local lengths and may end at different histories, but their aggregate use of local time cannot exceed the calendar time available to the decision maker.

\begin{definition}[Contingent multi-schedule]\label{def:multi-schedule}
	A contingent multi-schedule based at \(x=(x_1,\ldots,x_N)\) consists of one local contingent schedule for each schedule \(i\), written in schedule \(i\)'s own local time, and an increasing path \(\gamma\) satisfying the elapsed calendar constraint \eqref{eq:clock-identities}. At each calendar date, the unique schedule whose local clock advances takes one step in its local schedule; all other local schedules remain at their current local states.
\end{definition}


This is the feasible set of the multi-schedule problem. The primitive alternatives remain local schedules. The increasing path is the feasible allocation of elapsed calendar time across those schedules. In this sense, the analogy with consumer choice is exact at the level of structure: preferences over primitive objects are one input, and a feasibility constraint is the other. The theorem below says that, once the primitive objects are local schedules and the feasibility constraint is elapsed calendar time, the resulting problem is a generalized rested bandit. The connection with the classical Gittins--Jones theorem is therefore through representation rather than through primitives: the object to which the usual single-project stopping logic applies is generated from schedule preferences and the calendar constraint, rather than assumed at the outset \citep{GittinsJones1974,Gittins1979,Weber1992}.

\begin{theorem}[Bandit representation and index]\label{thm:bandit-index}\label{thm:product-bandit}\label{thm:index-optimality}
	For each schedule \(i\), suppose the local schedule preferences satisfy D, R, and R7, and suppose the attainable continuation domain is closed under common scalar translations. Write the resulting schedule-\(i\) recursive representation on a common utility scale and common discount factor \(\beta\in(0,1)\) as follows: advancing schedule \(i\) in local state \(x_i\) gives current utility \(u(x_i)\), and the continuation vector over next local states is evaluated by an increasing cash-additive operator \(\rho_i^{x_i}\) satisfying
	\[
	\rho_i^{x_i}(v+c\1)=\rho_i^{x_i}(v)+c
	\qquad
	\text{for every }v\in\R^{S_i},\ c\in\R .
	\]
	
	Then the elapsed-calendar constrained contingent multi-schedule problem is a rested generalized multi-armed bandit. Its value \(V:\mathbf S\to\R\) is characterized by the Bellman equation
	\begin{equation}\label{eq:bandit-bellman}
		V(x)
		=
		\max_{1\le i\le N}
		\left\{
		u(x_i)
		+
		\beta
		\rho_i^{x_i}
		\left(
		\bigl(V(x_{-i},y_i)\bigr)_{y_i\in S_i}
		\right)
		\right\}.
	\end{equation}
	
	For each scalar shadow price \(\lambda\in\R\) of elapsed calendar time, define the one-schedule value \(V_i^\lambda:S_i\to\R\) by
	\begin{equation}\label{eq:charged-stopping-value}
		V_i^\lambda(x_i)
		=
		\max\left\{
		\frac{\lambda}{1-\beta},
		\,
		u(x_i)
		+
		\beta
		\rho_i^{x_i}
		\left(
		\bigl(V_i^\lambda(y_i)\bigr)_{y_i\in S_i}
		\right)
		\right\}.
	\end{equation}
	Let
	\[
	D_i(\lambda)
	:=
	\left\{
	x_i\in S_i:
	V_i^\lambda(x_i)=\frac{\lambda}{1-\beta}
	\right\}
	\]
	be the set of local states in which schedule \(i\) is put on hold at shadow price \(\lambda\), and define the index
	\begin{equation}\label{eq:index-definition}
		\Lambda_i(x_i)
		:=
		\inf\{\lambda\in\R:x_i\in D_i(\lambda)\}.
	\end{equation}
	The on-hold sets are nested:
	\[
	\lambda'\ge\lambda
	\quad\Longrightarrow\quad
	D_i(\lambda)\subseteq D_i(\lambda')
	\qquad\text{for every }i.
	\]
	Hence \(\Lambda_i(x_i)\) is well defined. An optimal policy for the elapsed-calendar constrained contingent multi-schedule problem is to advance, at each product state \(x\), any schedule in
	\begin{equation}\label{eq:max-index-policy}
		\argmax_{1\le i\le N}\Lambda_i(x_i).
	\end{equation}
\end{theorem}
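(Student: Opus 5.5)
The plan has three parts: the Bellman characterization via a contraction argument, the nesting of on-hold sets via monotonicity in \(\lambda\), and index optimality via a verification argument. Throughout I use that each \(\rho_i^{x_i}\) is increasing and cash additive. These two properties together make it sup-norm nonexpansive: from \(v\le w+\|v-w\|_\infty\1\) we get \(\rho(v)\le\rho(w)+\|v-w\|_\infty\), and symmetrically.

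\emph{Bandit representation.} By Theorem~\ref{thm:recursive} and Corollary~\ref{cor:induced-stopping}, each local schedule contributes \(u(x_i)\) plus the discounted \(\rho_i^{x_i}\)-aggregate of its continuation when it is advanced. Under the elapsed-calendar constraint \eqref{eq:clock-identities}, the arms not advanced keep their local states. So a calendar-date decision maps the product state \(x\) to \((x_{-i},y_i)\). I will write the multi-schedule value through the operator
\[
(\mathcal T W)(x)=\max_i\bigl\{u(x_i)+\beta\rho_i^{x_i}\bigl((W(x_{-i},y_i))_{y_i}\bigr)\bigr\}.
\]
Nonexpansiveness of each \(\rho_i^{x_i}\) makes \(\mathcal T\) a \(\beta\)-contraction on the bounded functions on the finite set \(\mathbf S\), so \eqref{eq:bandit-bellman} has a unique solution. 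To identify this fixed point with the supremum over contingent multi-schedules, I will truncate at finite calendar horizons. Stopped schedules are exactly the finite objects in the domain, and the truncation error is \(O(\beta^n)\). This gives the rested generalized bandit.

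\emph{Nesting of on-hold sets.} The same contraction argument gives a unique \(V_i^\lambda\) in \eqref{eq:charged-stopping-value}. Cash additivity then shows that \(V_i^\lambda-\lambda/(1-\beta)\) solves the same equation with \(0\) in place of \(\lambda/(1-\beta)\) and with \(u(x_i)-\lambda\) in place of \(u(x_i)\). That is, \(\lambda\) acts exactly like a per-period charge. By monotonicity of \(\rho\), the function \(\lambda\mapsto V_i^\lambda(x_i)-\lambda/(1-\beta)\) is nonincreasing and nonnegative. Hence once it reaches \(0\) it stays there, so \(D_i(\lambda)\subseteq D_i(\lambda')\) for \(\lambda'\ge\lambda\). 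Nonexpansiveness makes \(V_i^\lambda\) continuous in \(\lambda\), so the infimum in \eqref{eq:index-definition} is attained and finite. Finiteness uses \(\lambda\) below \(\min u\) and above \(\max u\).

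\emph{Index optimality.} This is the main obstacle, because the classical interchange and prevailing-charge proofs use linearity of expectation across arms. My first attempt is a verification argument. Let \(V^I\) be the value of the max-index policy; it suffices to show \(\mathcal T V^I\le V^I\), since then contraction gives \(V^I=V\).

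Fix \(x\) and let \(j\) be a maximal-index arm. Suppose one instead advances some arm \(i\ne j\) and then follows the index policy. Let \(\lambda^*\) be the largest index among the other arms \(k\ne i\). Under the index policy, arm \(i\) keeps being advanced exactly until its index falls below \(\lambda^*\). Because the arms are rested and every \(\rho\) acts only on its own coordinate, the frozen arms contribute a deterministic continuation. By cash additivity, that continuation enters arm \(i\)'s local recursion as a scalar. This should identify the index policy's treatment of arm \(i\) with the optimal stopping rule in the charged problem \eqref{eq:charged-stopping-value} at the prevailing charge \(\lambda^*\). This is Weber's prevailing-charge bookkeeping, with expectations replaced by \(\rho\), and it is legitimate only because of R7 via Proposition~\ref{prop:cash}.

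I would then compare ``advance \(i\) first'' with ``advance \(j\) first'' by a pairwise interchange. Since \(\Lambda_j(x_j)\ge\Lambda_i(x_i)\), the charged-stopping characterization of \(\Lambda_j\) shows that the \(j\)-excursion started now earns at least its charge \(\Lambda_j(x_j)\) per period. The \(i\)-step, at charge \(\Lambda_j(x_j)\), is weakly in the on-hold region. Combining these through monotonicity of the \(\rho\)'s yields the required inequality.

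If the interchange step fails for a fully nonlinear \(\rho\), my fallback is induction on \(\sum_i|S_i|\), in the style of Tsitsiklis's reduction. Take the state with the globally smallest index, say in arm \(k\). Show it is always optimal to leave that state last, and splice it out by replacing arm \(k\)'s operator at predecessor states with a composed operator. Cash additivity and monotonicity are preserved under this composition, so the class of operators in the theorem is closed under splicing and the induction goes through.
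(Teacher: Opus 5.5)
Your Bellman and nesting steps are fine and follow the paper's route: nonexpansiveness from monotonicity plus cash additivity, contraction, the shift $F_i^\lambda=V_i^\lambda-\lambda/(1-\beta)$, and monotone comparison in $\lambda$.

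\textbf{The gap is the index-optimality step.} It sits exactly at your claim that the frozen arms ``contribute a deterministic continuation'' that enters arm $i$'s recursion ``as a scalar.'' That is true only for a constant competitor. In general the other arms enter each branch of arm $i$ through a maximum with arm $i$'s own continuation: the agent switches on some branches and not others, and may later return to arm $i$. So arm $i$'s continuation vector is perturbed by a branch-dependent amount, and cash additivity, which only controls shifts by $c\1$, says nothing about it.

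Your fallback fails for the same reason. Splicing out a state creates random-duration steps, so a shift $v\mapsto v+c\1$ moves the spliced coordinate by $\beta c$ and the others by $c$. The charged recursion then needs additivity along a non-constant vector of discounted durations. That holds for linear $\rho$, which is why Tsitsiklis's reduction works there, but not for general cash-additive $\rho$. The interchange that justifies splicing also changes the order in which different arms' operators are nested. For nonlinear operators that order matters; under max--min it decides whose adverse law may depend on the other arm's outcome.

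\textbf{This step cannot be repaired using only monotonicity and cash additivity, because the max-index claim is false at this generality.} Take $\beta=\tfrac12$.
Arm $2$ is deterministic: $u(b_0)=1.05$, then absorbing $b_1$ with $u(b_1)=0$, so $\Lambda_2(b_0)=1.05$.
Arm $1$ starts at $s$ with $u(s)=0$ and moves to absorbing states $H,C,L$ with utilities $3.5,2,0$. It is evaluated by the max--min operator $\rho_1^s(v)=\min\{\tfrac12(v_H+v_L),v_C\}$, that is, $P(s)=\mathrm{conv}\{\tfrac12\delta_H+\tfrac12\delta_L,\delta_C\}$. This operator is increasing and cash additive.
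Here $V_1^\lambda(s)=\max\{2\lambda,\rho_1^s(\max(\lambda,3.5),\max(\lambda,2),\max(\lambda,0))\}$, which switches to hold exactly at $\lambda=1$. So $\Lambda_1(s)=1<\Lambda_2(b_0)$, and the max-index rule advances arm $2$ at $(s,b_0)$, earning $1.05+\tfrac12\cdot\tfrac12\rho_1^s(7,4,0)=1.925$.
Advancing arm $1$ first instead earns $\tfrac12\rho_1^s\bigl(V(H,b_0),V(C,b_0),V(L,b_0)\bigr)=\tfrac12\rho_1^s(7,4,1.05)=2$.

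The mechanism is the one you assumed away. Arm $2$'s payoff is used only on arm $1$'s bad branch, so arm $1$'s continuation is $(7,4,1.05)$ rather than the charged $(7,4,2.1)$ at $\lambda=1.05$, a non-constant perturbation. It also flips the worst-case law from the spread prior to $\delta_C$: an ambiguity-averse agent gains from resolving arm $1$ while arm $2$ is still available as a hedge.

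The paper's own argument does not supply the missing step either. It checks the one-step comparison against the separable bound $B^\lambda(x)=\lambda/(1-\beta)+\sum_iF_i^\lambda(x_i)$, with $\lambda$ re-chosen state by state. That shows the max-index arm maximizes the Bellman operator applied to $B^\lambda$, not applied to $V$. Moreover $\inf_\lambda B^\lambda$ can strictly exceed $V$ even under expected utility: deterministic arms $(10,0,0,\ldots)$ and $(6,6,0,\ldots)$ with $\beta=0.9$ give $V=20.26$ but $\inf_\lambda B^\lambda=21.4$.

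Index optimality therefore needs structure beyond monotonicity and cash additivity. Linear $\rho$ is the classical sufficient condition, and Weber's prevailing-charge argument works there precisely because expectations add across arms.
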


The first part of the theorem is the bandit representation. At a product state \(x\), the only calendar decision is which local clock to advance. If schedule \(i\) is advanced, only the \(i\)th coordinate moves. The continuation values relevant to that local schedule are therefore
\[
\bigl(V(x_{-i},y_i)\bigr)_{y_i\in S_i},
\]
which is exactly the vector appearing inside \(\rho_i^{x_i}\) in \eqref{eq:bandit-bellman}. The elapsed-calendar constraint therefore turns a collection of local stopping problems into a rested product problem.

The second part identifies the index. The index is not an additional priority ranking imposed from outside the model. It is computed from the one-schedule problem. At shadow price \(\lambda\), putting schedule \(i\) on hold indefinitely has value \(\lambda/(1-\beta)\). The set \(D_i(\lambda)\) consists of the local states at which that on-hold value is at least as good as advancing the schedule one more local step. The index \(\Lambda_i(x_i)\) is the critical shadow price at which putting the schedule on hold first becomes optimal in this isolated one-schedule comparison.

This language should be read as part of the one-schedule shadow-price calculation, not as literal abandonment of a real-world responsibility. A schedule that is not advanced at a given calendar date has not disappeared. In the constrained multi-schedule problem, it remains at its current local state while another schedule's local clock advances. The one-schedule calculation asks a sharper question: how high must the shadow price of calendar time be before advancing this schedule is no longer worthwhile? The answer is the index.

This is the shadow-price interpretation of the calendar-budget analogy. In a consumer problem, a Lagrange multiplier prices the scarce resource in the budget constraint. Here, \(\lambda\) prices one unit of elapsed calendar time. The one-schedule problem asks whether schedule \(i\), at local state \(x_i\), is still worth advancing when calendar time has shadow price \(\lambda\). The max-index rule advances the schedule with the highest current critical shadow price.

Cash additivity is what makes this scalar shadow price legitimate. To see this, subtract the on-hold value from the one-schedule problem. Define
\[
F_i^\lambda(x_i)
:=
V_i^\lambda(x_i)-\frac{\lambda}{1-\beta}.
\]
Using cash additivity of \(\rho_i^{x_i}\), \eqref{eq:charged-stopping-value} is equivalent to
\begin{equation}\label{eq:charged-excess-value}
	F_i^\lambda(x_i)
	=
	\max\left\{
	0,
	\,
	u(x_i)-\lambda
	+
	\beta
	\rho_i^{x_i}
	\left(
	\bigl(F_i^\lambda(y_i)\bigr)_{y_i\in S_i}
	\right)
	\right\}.
\end{equation}
In this excess-value form, \(\lambda\) is the shadow price of one unit of elapsed calendar time. The on-hold branch is normalized to zero. Increasing \(\lambda\) lowers the active branch by the same scalar in every local state. Since the on-hold branch is fixed, the set of states in which holding the schedule fixed is optimal expands monotonically with \(\lambda\). This is the indexability statement in the theorem.

The connection with the usual Weber and Whittle arguments is direct. Weber's charge formulation prices continuation directly: after subtracting the common on-hold value, the scalar appears as the price of advancing the schedule. Whittle's passive-subsidy formulation uses the opposite side of the same comparison: a rested inactive schedule receives the per-period subsidy \(\lambda\), which is equivalent to the one-time value \(\lambda/(1-\beta)\). In either sign convention, the scalar is the Lagrange multiplier, or shadow price, associated with elapsed calendar time. R7 is the primitive condition ensuring that this scalar can be moved through the recursive continuation aggregator without being distorted.

To get an idea about how the proof works, notice that the product-state Bellman equation follows from the elapsed-calendar constraint: a feasible calendar step advances exactly one local clock.  Maximizing over the schedule to be advanced gives \eqref{eq:bandit-bellman}. Monotonicity and cash additivity imply that each \(\rho_i^{x_i}\) is nonexpansive in the sup norm, so the discounted Bellman operator is a contraction.

For the index part, introduce a Lagrangian relaxation of the calendar constraint. In the relaxed problem, schedules choose active/passive independently rather than being required path by path to have exactly one active schedule each date; the scarcity of calendar time is priced by a scalar multiplier \(\lambda\). Cash additivity is what lets this scalar price enter each one-schedule problem as in \eqref{eq:charged-excess-value}, so the relaxed problem separates schedule by schedule. The key point is that a relaxed optimum is not generally feasible for the original problem: it may advance several schedules, or none. The index argument shows why the relaxation is tight here. A suitable tie-breaking rule selects a relaxed optimizer that advances exactly one maximal-index schedule and holds all nonmaximal schedules fixed. This selected relaxed optimizer satisfies the original elapsed-calendar constraint. Since the relaxation upper-bounds the original problem, and the max-index implementation attains that relaxed bound while satisfying the original constraint, the max-index policy is optimal.

The result is deliberately one-directional. The primitive schedule axioms, common-tail invariance, and rested product construction deliver a generalized bandit representation and its index policy. The theorem does not claim that every abstract bandit representation has a primitive schedule foundation of this form. In the special cases below, stronger equivalence statements are available because additional axioms pin down the continuation aggregator more sharply.

\section{Extensions and Special Cases}\label{sec:extensions}

Now we show familiar models arise through additional restrictions/axioms on our schedule domain. The first is the subjective expected-utility case, where the abstract continuation aggregator becomes a subjective transition kernel and the index reduces to the classical Gittins--Jones ratio. The second is variational preferences, which nest max-min as a further special case. The third is the rank-dependent, or Choquet, case, where continuation branches are evaluated by their rank rather than by additive probabilities. The final one is Pandora's problem, which is not a different continuation aggregator but a restriction on the local schedule domain: one local step opens a box and resolves all uncertainty.

For each state \(x\), define the induced continuation preference \(\succeq_x^C\) on continuation families by
\begin{equation}\label{eq:cont-pref}
	(a^y)_{y\in S}\succeq_x^C(b^y)_{y\in S}
	\quad\Longleftrightarrow\quad
	\Gamma_x(0,(a^y)_{y\in S})\succeq_x\Gamma_x(0,(b^y)_{y\in S}).
\end{equation}
Thus the current consequence is fixed at the null lottery, and only the contingent continuation tree varies. By R1, this comparison depends only on the continuation equivalence classes. Under D+R, it is represented by
\[
(a^y)_y
\mapsto
\rho_x\bigl((U_y(a^y))_{y\in S}\bigr).
\]
In this section, the continuation domain is assumed to contain the mixtures and common shifts used in the stated axioms. For notational purposes, continuation families can be identified with their continuation-value vectors
\[
U(a):=(U_y(a^y))_{y\in S}\in\R^S.
\]
Let \(\mathfrak T_x^+\) denote the finite stopping times \(\tau\ge1\) for the one-schedule problem starting from \(x\). Under the consequence-bearing-state convention, the cash-additive shadow-price equation is
\begin{equation}\label{eq:special-excess-template}
	F^\lambda(x)
	=
	\max\left\{
	0,\,
	u(x)-\lambda+\beta\rho_x(F^\lambda)
	\right\}.
\end{equation}
The associated per-period index is
\begin{equation}\label{eq:special-index-template}
	I(x):=\inf\{\lambda\in\R:F^\lambda(x)=0\}.
\end{equation}
The role of each special case is to identify the relevant continuation aggregator \(\rho_x\), and then substitute it into \eqref{eq:special-excess-template}.

\subsection{Subjective expected utility}

The subjective expected-utility case is the first special case we consider. Many familiar economic bandit models, including Bayesian experimentation and learning models, fall into this framework \citep{Rothschild1974,BoltonHarris1999,KellerRadyCripps2005,BergemannValimaki2008}. The restriction is not imposed directly on a transition law. It is imposed on continuation schedules. Once the current consequence is fixed, the decision maker satisfies ordinary independence over objective mixtures of continuation trees. Under that restriction, the continuation aggregator is linear, so uncertainty in a schedule is summarized by a subjective transition kernel.

This case is sharper than the general bandit representation above. In the general result, D, R, and R7 are enough to obtain a recursive stopping problem, scalar shadow prices, and an index rule. Under subjective expected utility, an additional continuation-independence axiom pins down the continuation aggregator as a linear expectation. Thus the primitive schedule restrictions become \emph{equivalent} to the standard expected-utility bandit representation.

\begin{assumption}[EU: one-step continuation independence]\label{ass:eu-cont-ind}
	For every \(x\in S\), the continuation domain is closed under objective statewise mixtures.\footnote{Mixtures are formed componentwise across next states. If one component schedule has already stopped, it contributes the null lottery \(0\) at all remaining local dates until the other component also stops; if both have stopped, the mixed component is stopped.} For all continuation families \(a,b,c\) and all \(\alpha\in(0,1)\),
	\begin{equation}\label{eq:eu-cont-ind}
		a\succeq_x^C b
		\quad\Longleftrightarrow\quad
		\alpha a+(1-\alpha)c\succeq_x^C \alpha b+(1-\alpha)c.
	\end{equation}
\end{assumption}

The axiom is the Anscombe--Aumann independence condition applied to continuation schedules. Once the current consequence is fixed, mixing both continuation trees with the same background continuation tree cannot change their ranking. The representation therefore recovers a subjective law over next local states.

\begin{proposition}[Subjective expected utility and the Gittins--Jones index]\label{prop:eu-index}
	Assume D, R and assumption \ref{ass:eu-cont-ind}. Then the following are equivalent.
	\begin{enumerate}[label=(\roman*)]
		\item EU holds for every \(x\in S\).
		\item There exists a Markov kernel \(p=(p_x)_{x\in S}\), with \(p_x\in\Delta(S)\), such that
		\begin{equation}\label{eq:eu-linear-rho-special}
			\rho_x(v)=\sum_{y\in S}p_x(y)v_y
			\qquad\text{for every }x\in S.
		\end{equation}
		Equivalently,
		\begin{equation}\label{eq:eu-recursive-special}
			U_x\Bigl(\Gamma_x(\ell,(a^y)_{y\in S})\Bigr)
			=
			u(\ell)+\beta\sum_{y\in S}p_x(y)U_y(a^y).
		\end{equation}
	\end{enumerate}
	Under either condition, R7 holds. The shadow-price equation is
	\begin{equation}\label{eq:eu-excess-special}
		F^\lambda(x)
		=
		\max\left\{
		0,\,
		u(x)-\lambda+\beta\sum_{y\in S}p_x(y)F^\lambda(y)
		\right\},
	\end{equation}
	and the resulting index is the Gittins--Jones ratio
	\begin{equation}\label{eq:eu-index-special}
		I^{EU}(x)
		=
		\sup_{\tau\in\mathfrak T_x^+}
		\frac{
			\E_x^p\left[\sum_{t=0}^{\tau-1}\beta^t u(X_t)\right]
		}{
			\E_x^p\left[\sum_{t=0}^{\tau-1}\beta^t\right]
		}.
	\end{equation}
\end{proposition}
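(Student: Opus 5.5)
The proof has three parts: the equivalence (i)$\Leftrightarrow$(ii), the verification of R7, and the identification of the index with the Gittins--Jones ratio.

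\emph{The equivalence.} By Theorem~\ref{thm:recursive}, the continuation preference $\succeq_x^C$ in \eqref{eq:cont-pref} is represented by $v\mapsto\rho_x(v)$ on continuation-value vectors, and $\rho_x$ is continuous and increasing.

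For (i)$\Rightarrow$(ii), first check that the statewise mixture of continuation families has value vector $\alpha U(a)+(1-\alpha)U(c)$. Componentwise this is a mixture of schedules from each $y$. On deterministic streams it holds by the affine representation in Proposition~\ref{prop:detstreams}, with the null-lottery padding convention covering unequal lengths via D5. For general trees, induct on tree depth: use R1 and the recursive formula \eqref{eq:recursive-rep}, and apply Assumption~\ref{ass:eu-cont-ind} at each node.

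With this in hand, $\succeq_x^C$ is a continuous weak order on a convex set of vectors $v$ satisfying vNM independence. The mixture-space (Herstein--Milnor) theorem then yields an affine representation $W_x(v)$. Since $\rho_x$ and $W_x$ represent the same order, they are related by a monotone transformation. To show $\rho_x$ is itself affine, compare with deterministic continuations. For constant vectors $v=k\1$ generated by deterministic one-date schedules, \eqref{eq:recursive-rep} together with \eqref{eq:det-rep} forces $\rho_x(k\1)=k$. Hence $\rho_x$ coincides with $W_x$ after calibrating $W_x$ on the diagonal; one uses the diagonal as a ruler, since each $v$ is indifferent to some $k\1$ by solvability R6. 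An affine, increasing, normalized functional on a convex subset of $\R^S$ containing an open piece extends to a linear functional $\sum_y p_x(y)v_y$. Monotonicity gives $p_x\ge0$, and $\rho_x(\1)=1$ gives $\sum_y p_x(y)=1$.

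For (ii)$\Rightarrow$(i), linearity of $\rho_x$ directly gives independence over mixtures, using the same value-of-mixture identity.

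\emph{R7.} A linear $\rho_x$ with weights summing to one is cash additive, so R7 follows from Proposition~\ref{prop:cash}.

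\emph{The index.} Substituting $\rho_x$ into \eqref{eq:special-excess-template} gives \eqref{eq:eu-excess-special}. Since the Bellman operator is a contraction (monotone and cash additive, with $\beta<1$), $F^\lambda$ is the unique fixed point. Standard optimal stopping then gives
\[
F^\lambda(x)=\sup_{\tau}\E_x^p\Big[\sum_{t<\tau}\beta^t\big(u(X_t)-\lambda\big)\Big],
\]
where $\tau$ ranges over stopping times including $\tau=0$. Thus $F^\lambda(x)=0$ if and only if, for every $\tau\in\mathfrak T_x^+$,
\[
\E_x^p\Big[\sum_{t<\tau}\beta^t u(X_t)\Big]\le\lambda\,\E_x^p\Big[\sum_{t<\tau}\beta^t\Big].
\]
The denominator is positive because $\tau\ge1$. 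Taking the infimum of such $\lambda$ gives \eqref{eq:eu-index-special}. If $\mathfrak T_x^+$ is read as finite stopping times, restricting to finite $\tau$ changes no suprema: truncate $\tau$ at $\tau\wedge T$ and let $T\to\infty$ using bounded $u$ and discounting.

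\emph{Main obstacle.} The hardest step is showing that $\rho_x$ is linear, not merely ordinally equivalent to a linear functional. It rests on the mixture-of-values identity for general trees and on calibrating along the diagonal. I would establish the identity first, by backward induction on the finite termination depth. Finite termination on a finite state space gives a uniform depth bound by K\"onig's lemma, which makes the induction well founded.
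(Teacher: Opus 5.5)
Your proposal is correct and follows essentially the paper's route: an Anscombe--Aumann/Herstein--Milnor affine representation of $\succeq_x^C$, the diagonal normalization $\rho_x(t\1)=t$ to remove the monotone transformation and pin down probability weights, cash additivity with Proposition~\ref{prop:cash} for R7, and the retire-at-$\tau$ ratio argument for the index, which you make more explicit through the stopping-time form of $F^\lambda$. Your extra lemma, that statewise mixtures of continuation families have mixed value vectors, is a step the paper takes for granted by identifying families with their value vectors, but it cannot be proved ``first'' by depth induction alone: the depth-$k$ step needs $\rho_y(\alpha v+(1-\alpha)w)=\alpha\rho_y(v)+(1-\alpha)\rho_y(w)$ on depth-$(k-1)$ values, so either prove the identity and the affinity of every $\rho_y$ in one simultaneous induction, or apply Herstein--Milnor directly to the families and calibrate on constant deterministic families, which gives affinity of every $U_y$ at all depths at once.
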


\subsubsection{Finite good-news learning}\label{subsubsec:learning}

Learning is not a new continuation aggregator. It is a special case of the
subjective expected-utility representation in which the local state of a strand
records the progress of information. The primitive object is still a schedule
whose local states carry consequences. The additional structure is a finite
schedule subdomain and preference restrictions that make the represented SEU kernel
look like a familiar learning process.

We spell this out for a stationary good-news model. This is only one example. The
same logic can accommodate other finite learning structures by changing the finite
schedule subdomain and the corresponding consistency restrictions. The point of
using integer states below is deliberate: the state space need not be written as a
set of posterior probabilities or likelihood ratios. For any fixed finite horizon,
the reachable learning states are finite. We can label them by integers, or by any
other finite ordered set. The posterior and log-likelihood scale are recovered
after the representation.

For a finite horizon \(T\), consider the finite learning subdomain
\[
	S_T^G=\{\gamma,0,1,\ldots,T,\rho\}\subseteq S .
\]
The integer states \(0,1,\ldots,T\) are unresolved learning states. The state
\(\rho\) is the resolved-good state, and \(\gamma\) is a lower benchmark state.
The active unresolved states are \(1,\ldots,T\). State \(0\) is the terminal
unresolved state for this finite truncation. The primitive no-news successor of an
active state \(n\) is \(n-1\). These are only labels. In particular, the integer
distance between \(n\) and \(n-1\) is not assumed to be a unit of log likelihood.

For an active integer state \(n\), write \([z]_n\) for the constant continuation
family at \(n\) that delivers the consequence carried by state \(z\) at every
next-state coordinate:
\[
	[z]_n^y=d_y((z)),
	\qquad y\in S_T^G .
\]
Here \(d_y((z))\) denotes the one-period deterministic schedule from state \(y\)
that delivers the consequence carried by state \(z\). For an event
\(A\subseteq S_T^G\), write
\[
	[\rho\text{ on }A;\gamma\text{ off }A]_n
\]
for the continuation family at \(n\) that delivers the consequence carried by
\(\rho\) on coordinates in \(A\), and the consequence carried by \(\gamma\) on
coordinates outside \(A\). More generally,
\[
	[\rho\text{ on }\{\rho\};\, n-1\text{ on }\{n-1\};\,\gamma
	\text{ off }\{\rho,n-1\}]_n
\]
denotes the continuation family that delivers \(\rho\) on the \(\rho\)-coordinate,
\(n-1\) on the \(n-1\)-coordinate, and \(\gamma\) on every other coordinate.
Mixtures of such continuation families are the objective statewise mixtures used
in Assumption~\ref{ass:eu-cont-ind}.

\begin{assumption}[GNL: finite good-news learning]\label{ass:good-news-learning}
	On the finite learning subdomain
	\[
		S_T^G=\{\gamma,0,1,\ldots,T,\rho\},
	\]
	the following restrictions hold.

	\begin{enumerate}[label=(\roman*)]
		\item \emph{State ordering.} For every active integer state
		\(n=1,\ldots,T\) and every unresolved state \(j=0,\ldots,T\),
		\[
			[\rho]_n\succ_n^C [j]_n\succ_n^C[\gamma]_n .
		\]
		Moreover, for every active integer state \(n=1,\ldots,T\),
		\[
			[n]_n
			\succ_n^C
			[\rho\text{ on }\{\rho\};\gamma\text{ off }\{\rho\}]_n
			\succ_n^C
			[\gamma]_n .
		\]

		\item \emph{Good-news/no-news exhaustiveness.} For every active integer
		state \(n=1,\ldots,T\),
		\begin{equation}\label{eq:gnl-exhaustiveness}
			[\rho\text{ on }\{\rho,n-1\};\gamma\text{ off }\{\rho,n-1\}]_n
			\sim_n^C
			[\rho]_n .
		\end{equation}

		\item \emph{Stationary relative value of good news.} For every pair of
		active integer states \(m,n=1,\ldots,T\) and every \(\alpha\in(0,1)\),
		\begin{equation}\label{eq:gnl-relative-stationarity}
			[\rho\text{ on }\{\rho\};\gamma\text{ off }\{\rho\}]_n
			\succeq_n^C
			\alpha[n]_n+(1-\alpha)[\gamma]_n
		\end{equation}
		if and only if
		\begin{equation*}
			[\rho\text{ on }\{\rho\};\gamma\text{ off }\{\rho\}]_m
			\succeq_m^C
			\alpha[m]_m+(1-\alpha)[\gamma]_m .
		\end{equation*}

		\item \emph{Successor-consequence consistency.} For every active integer
		state \(n=1,\ldots,T\),
		\begin{equation}\label{eq:gnl-successor-consistency}
			[n]_n
			\sim_n^C
			[\rho\text{ on }\{\rho\};\, n-1\text{ on }\{n-1\};\, \gamma
			\text{ off }\{\rho,n-1\}]_n .
		\end{equation}
	\end{enumerate}
\end{assumption}

The first clause gives the ordinal role of the special states. The resolved state
\(\rho\) is better than every unresolved integer state, and every unresolved
integer state is better than the lower benchmark \(\gamma\). The additional strict
comparison says that the good-news branch is valuable, but not already as valuable
as the current unresolved state. This rules out degenerate cases in which good
news is either worthless or already certain.

The second clause is an exhaustiveness test. At active state \(n\), a continuation
folder that pays \(\rho\) on the event \(\{\rho,n-1\}\) and \(\gamma\) otherwise is
equivalent to receiving \(\rho\) for sure. Under the SEU representation this will
mean that, from \(n\), all represented one-step weight lies on the two coordinates
\(\rho\) and \(n-1\). The axiom itself is only a preference comparison between
continuation folders.

The third clause is the stationarity restriction. It does not say that a primitive
probability is constant. It says that the branch bet on the resolved-good
coordinate has the same relative value at every active integer state. The
comparison is measured against mixtures of the current state \(n\) and the lower
benchmark \(\gamma\), not against an exogenous posterior label. The common relative
value will become the good-news intensity only after the SEU representation is
applied.

The fourth clause connects the integer labels to learning. At state \(n\), the
current unresolved consequence is equivalent to the one-step
successor-consequence folder: receive \(\rho\) on the resolved-good coordinate,
receive \(n-1\) on the no-news coordinate, and receive \(\gamma\) elsewhere. This
is the schedule version of a posterior martingale restriction, but it is stated
only as an indifference between continuation folders.

\begin{lemma}[Finite good-news representation]\label{lem:finite-good-news-representation}
	Assume D, R, EU, and GNL on
	\[
		S_T^G=\{\gamma,0,1,\ldots,T,\rho\}.
	\]
	Define the normalized revealed value of an unresolved integer state by
	\[
		\mu_n:=
		\frac{u(n)-u(\gamma)}{u(\rho)-u(\gamma)},
		\qquad n=0,\ldots,T .
	\]
	Then \(\mu_n\in(0,1)\). For every active integer state
	\(n=1,\ldots,T\), let \(q_n\in\Delta(S_T^G)\) be the represented one-step
	continuation law at \(n\). Then
	\[
		q_n(\{\rho,n-1\})=1.
	\]
	Moreover, there exists a constant \(\theta\in(0,1)\), independent of \(n\),
	such that
	\[
		q_n(\rho)=\theta\mu_n,
		\qquad
		q_n(n-1)=1-\theta\mu_n .
	\]
	The normalized revealed values satisfy
	\begin{equation}\label{eq:gnl-no-news-update}
		\mu_{n-1}
		=
		\frac{(1-\theta)\mu_n}{1-\theta\mu_n}.
	\end{equation}
	Equivalently,
	\begin{equation}\label{eq:gnl-log-odds-update}
		\log\frac{\mu_{n-1}}{1-\mu_{n-1}}
		=
		\log\frac{\mu_n}{1-\mu_n}
		+
		\log(1-\theta).
	\end{equation}
	Hence the integer states are identified with an affine log-likelihood scale:
	\[
		r_n=a+n\delta,
		\qquad
		\delta:=-\log(1-\theta).
	\]
	The represented process is the finite stationary good-news process: good news
	arrives with probability \(\theta\) conditional on the good type and never
	arrives conditional on the bad type.
\end{lemma}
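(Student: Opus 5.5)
The plan is to reduce every comparison in GNL to linear equations in the represented kernel. Proposition~\ref{prop:eu-index} applies under D, R and EU, so at each active state \(n\) the continuation aggregator is \(\rho_n(v)=\sum_y q_n(y)v_y\) for some \(q_n\in\Delta(S_T^G)\). Continuation comparisons \(\succeq_n^C\) are therefore represented by \(a\mapsto\sum_y q_n(y)U_y(a^y)\).

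The first step is to evaluate the test folders. By Proposition~\ref{prop:detstreams} together with D8, the one-period deterministic schedule \(d_y((z))\) has value \(u(z)\) from every \(y\), independently of \(y\). Normalize \(u(\gamma)=0\) and \(u(\rho)=1\), so that \(u(n)=\mu_n\). This normalization is legitimate because (i) forces \(u(\rho)>u(\gamma)\): evaluate \([\rho]_n\succ_n^C[\gamma]_n\) and use \(\sum_y q_n(y)=1\). The values are then:
\begin{itemize}
\item \([z]_n\) has value \(u(z)\);
\item \([\rho\text{ on }A;\gamma\text{ off }A]_n\) has value \(q_n(A)\);
\item the successor folder in (iv) has value \(q_n(\rho)+q_n(n-1)\mu_{n-1}\).
\end{itemize}
For mixtures, I will check that the objective statewise mixture \(\alpha[n]_n+(1-\alpha)[\gamma]_n\) has componentwise value \(\alpha\mu_n\). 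Each component is the one-period deterministic schedule delivering the mixed lottery, and \(u\) is affine. I expect this bookkeeping to be the only delicate point: mixtures of continuation families must be identified with convex combinations of their value vectors, and this needs \(u\) to be affine and D8 to make values state-independent.

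Reading off the axioms then goes as follows.
\begin{itemize}
\item From the first part of (i), \(1>\mu_j>0\) for all unresolved \(j\).
\item From the second part of (i), \(\mu_n>q_n(\rho)>0\).
\item From (ii), \(q_n(\{\rho,n-1\})=1\).
\item From (iii), \(q_n(\rho)\ge\alpha\mu_n\) if and only if \(q_m(\rho)\ge\alpha\mu_m\), for every \(\alpha\in(0,1)\).
\end{itemize}
By the strict bounds from (i), the ratio \(q_n(\rho)/\mu_n\) lies in \((0,1)\). It therefore equals \(\sup\{\alpha\in(0,1):q_n(\rho)\ge\alpha\mu_n\}\), and (iii) says this supremum is the same at every active state. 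Call the common value \(\theta\in(0,1)\). Then \(q_n(\rho)=\theta\mu_n\), and (ii) gives \(q_n(n-1)=1-\theta\mu_n\), which is positive.

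Substituting these into (iv) gives \(\mu_n=\theta\mu_n+(1-\theta\mu_n)\mu_{n-1}\), which rearranges to \eqref{eq:gnl-no-news-update}. For the log-odds form, I will compute \(1-\mu_{n-1}=(1-\mu_n)/(1-\theta\mu_n)\) directly from \eqref{eq:gnl-no-news-update}. Dividing the two expressions and taking logs yields \eqref{eq:gnl-log-odds-update}. Iterating in \(n\) shows that the log-odds \(r_n\) are affine in \(n\) with slope \(\delta=-\log(1-\theta)\).

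Finally, the interpretation follows by matching to Bayes' rule. Treat \(\mu_n\) as the prior probability of the good type, and suppose good news arrives with probability \(\theta\) under the good type and never under the bad type. Good news then has probability \(\theta\mu_n\). The posterior after no news is \((1-\theta)\mu_n/(1-\theta\mu_n)=\mu_{n-1}\). So the represented kernel is exactly the finite stationary good-news process.
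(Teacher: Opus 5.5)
Your proposal is correct and follows the paper's proof essentially step for step. You write $\rho_n$ as the EU kernel $q_n$ and evaluate each GNL test folder as a $q_n$-average of one-period utilities, which gives $\mu_j\in(0,1)$, $q_n(\{\rho,n-1\})=1$, the common ratio $\theta=q_n(\rho)/\mu_n$ from (iii), and the no-news update from (iv). The only differences are cosmetic: your normalization $u(\gamma)=0$, $u(\rho)=1$ is harmless because every comparison is between $q_n$-averages with $\sum_y q_n(y)=1$, and you pin down the threshold by a supremum where the paper picks an $\alpha$ strictly between two differing ratios.
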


The lemma shows how the finite integer schedule can be read as good-news learning.
Before the representation, the states are just
\[
	\gamma,0,1,\ldots,T,\rho .
\]
After D, R, and EU, each active integer state has a represented one-step law over
next-state coordinates. GNL disciplines those represented laws. Exhaustiveness
puts all weight on the resolved-good coordinate \(\rho\) and the one-step no-news
coordinate \(n-1\). Stationary relative valuation makes the represented value of
the \(\rho\)-branch proportional to the current revealed value \(\mu_n\), with the
same factor \(\theta\) at every active integer state. Successor-consequence
consistency then forces the no-news successor values to satisfy
\[
	\frac{\mu_{n-1}}{1-\mu_{n-1}}
	=
	(1-\theta)\frac{\mu_n}{1-\mu_n}.
\]
Thus the integer labels are not assumed to be log-likelihoods. They become
log-likelihood states in the representation, with one step down the integer chain
corresponding to the constant no-news decrement
\[
	-\log(1-\theta).
\]

This construction illustrates the general point. A learning model can be encoded
by a finite schedule subdomain whose states are ordinary consequence-bearing
states. The posterior, likelihood ratio, and signal probabilities are then
features of the represented SEU kernel, not primitives of the domain. We used
integers because they make this separation transparent, but the same finite chain
could be relabeled by rationals, names, or any other finite set. What matters is
the preference structure that makes the represented process isomorphic to the
standard good-news experiment.

No additional index theorem is required. Once the represented SEU kernel is
obtained, the learning strand is governed by the same Gittins--Jones index in
Proposition~\ref{prop:eu-index}. The learning restriction only explains when the
kernel behind that index is the one generated by good-news-only Bayesian updating. Extending to infinite horizons also follows by simply taking limits.

\subsection{Variational preferences}

Expected utility treats a continuation folder as if there were a single subjective law for the next local state. Many responsibility strands are not naturally like that. A work file may depend on how an editor or referee will react; a school matter may depend on how a teacher interprets a child's performance; a household repair may turn out to be minor, serious, or entangled with another task. The person may rank these continuation folders without being willing to name one transition law as the right description of the strand.

Variational preferences, with max-min expected utility as a special case, capture this kind of ambiguous scheduling. The decision maker evaluates a continuation folder against possible one-step descriptions of the next local state. A law \(p\) may be used to summarize the branch that will be reached, but using that law carries a penalty \(\alpha_x(p)\). Low-penalty laws are treated as plausible descriptions of the strand; high-penalty laws remain possible but are costly to use in the evaluation. This differs from robust bandit models that place ambiguity on the full multi-project environment, where an adverse model may choose a joint description of all projects and thereby link alternatives that would otherwise be separate \citep{KimLim2016,CaroGupta2022}. In the schedule interpretation, a work obligation, a household repair, a medical follow-up, and a family responsibility may all compete for the same calendar, but uncertainty about one strand need not be tied to uncertainty about the others. We therefore allow rich ambiguity and robustness within each local continuation problem, while keeping the cross-strand interaction exactly where the model puts it: in the scarce calendar.

\begin{assumption}[VA: one-step continuation uncertainty aversion]\label{ass:cont-unc-av}
	For every \(x\in S\), the continuation domain is closed under objective statewise mixtures, and for all continuation families \(a,b\),
	\[
	a\sim_x^C b
	\quad\Longrightarrow\quad
	\alpha a+(1-\alpha)b\succeq_x^C a
	\qquad\forall\alpha\in(0,1).
	\]
\end{assumption}

The axiom is the continuation-schedule version of uncertainty aversion. If two continuation folders are equally good, then objective randomization between them cannot make the decision maker worse off; mixing smooths the exposure to any single ambiguous continuation folder. Mixtures are formed state by state, using the same stopped-schedule convention as in Assumption~\ref{ass:eu-cont-ind}.

For the dynamic formulas below, a path law is a complete scenario book for the future local evolution of the strand. Given a finite history \(h_t=(x_0,\ldots,x_t)\), write \(p_t^{\mathbb P}(\cdot\mid h_t)\) for the one-step law selected by \(\mathbb P\) after that history. Let
\begin{equation}\label{eq:variational-path-laws-special}
\begin{aligned}
	\mathcal P_x^\Delta
	:=
	\Bigl\{\mathbb P:\ &\mathbb P(X_0=x)=1,
	\text{ and there are kernels }p_t^{\mathbb P}(\cdot\mid h_t)\in\Delta(S)\\
	&\text{such that }\mathbb P(X_{t+1}\in\cdot\mid H_t=h_t)=p_t^{\mathbb P}(\cdot\mid h_t)
	\text{ for every }t,h_t
	\Bigr\}.
\end{aligned}
\end{equation}
where \(H_t=(X_0,\ldots,X_t)\). Thus the infimum over \(\mathbb P\) below is an infimum over rectangular, history-dependent one-step descriptions of how the responsibility strand may branch. Just like subjective utility, assumption \ref{ass:cont-unc-av} disciplines the aggregator so that the representation is once again equivalent.

\begin{proposition}[Variational continuation and the penalized index]\label{prop:variational-index}
	Assume D, R and assumption \ref{ass:cont-unc-av}. Then the following are equivalent.
	\begin{enumerate}[label=(\roman*)]
		\item R7 and VA hold for every \(x\in S\).
		\item For each \(x\), there exists a convex lower-semicontinuous penalty \(\alpha_x:\Delta(S)\to[0,\infty]\), normalized by \(\min_p\alpha_x(p)=0\), such that
		\begin{equation}\label{eq:variational-rho-special}
			\rho_x(v)=\min_{p\in\Delta(S)}\{p\cdot v+\alpha_x(p)\}.
		\end{equation}
	\end{enumerate}
	Under either condition, the charged stopping equation is
	\begin{equation}\label{eq:variational-excess-special}
		F^W(x)
		=
		\max\left\{
		0,\,
		u(x)-W+\beta\min_{p\in\Delta(S)}
		\bigl[p\cdot F^W+\alpha_x(p)\bigr]
		\right\}.
	\end{equation}
	The resulting variational index is
	\begin{equation}\label{eq:variational-index-special}
		I^{VAR}(x)
		=
		\sup_{\tau\in\mathfrak T_x^+}\inf_{\mathbb P}
		\frac{
			\E_{\mathbb P}\left[
			\sum_{t=0}^{\tau-1}\beta^t u(X_t)
			+\sum_{t=0}^{\tau-1}\beta^{t+1}\alpha_{X_t}(p_t)
			\right]
		}{
			\E_{\mathbb P}\left[\sum_{t=0}^{\tau-1}\beta^t\right]
		},
	\end{equation}
	where \(p_t\) is the one-step transition law selected at history \(t\).
\end{proposition}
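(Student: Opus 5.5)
The plan is to prove the equivalence (i)$\Leftrightarrow$(ii) by identifying $\rho_x$ as a niveloid, then to derive the charged stopping equation and the index formula from Theorem~\ref{thm:bandit-index}.

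\emph{From (i) to (ii).} By Theorem~\ref{thm:recursive}, the continuation preference $\succeq_x^C$ in \eqref{eq:cont-pref} is represented by $a\mapsto\rho_x(U(a))$, with $\rho_x$ continuous and increasing. Proposition~\ref{prop:cash}, together with the maintained closure of the continuation domain under common shifts, turns R7 into full cash additivity $\rho_x(v+c\1)=\rho_x(v)+c$. Monotonicity plus cash additivity gives that $\rho_x$ is $1$-Lipschitz in the sup norm, so it is a niveloid.

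Next we translate VA into quasiconcavity. The delicate point is that a statewise objective mixture of continuation schedules has value vector $U(\alpha a+(1-\alpha)b)$, and this need not equal $\alpha U(a)+(1-\alpha)U(b)$ when the continuations are themselves ambiguous. To avoid this, we restrict VA to continuation families whose components are one-period deterministic schedules $d_y((\ell_y))$. For these, the value of the mixture is affine by D2 and Proposition~\ref{prop:detstreams}, so $U$ is affine on them. Their value vectors range over a box $\prod_y u(\Lcal)$, and cash additivity lets us translate and rescale (via positive homogeneity-free arguments on the niveloid) to cover the relevant region of $\R^S$. On that region, VA says $\rho_x(v)=\rho_x(w)\Rightarrow\rho_x(\alpha v+(1-\alpha)w)\ge\rho_x(v)$. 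For a continuous function this yields quasiconcavity. Quasiconcavity plus cash additivity then gives concavity: for arbitrary $v,w$, shift $w$ by $c=\rho_x(v)-\rho_x(w)$ and use translation invariance.

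We then apply the Maccheroni--Marinacci--Rustichini duality for concave niveloids. With
\[
\alpha_x(p):=\sup_v\{\rho_x(v)-p\cdot v\},
\]
which is convex and lower semicontinuous, we get
\[
\rho_x(v)=\min_{p}\{p\cdot v+\alpha_x(p)\}.
\]
Monotonicity forces the minimizing measures to be nonnegative, and cash additivity forces them to have total mass one, so they lie in $\Delta(S)$. The normalization $\min_p\alpha_x(p)=0$ follows from $\rho_x(0)=0$, which holds because $u(0)=0$ and $d_x(())\sim d_x((0))$.

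\emph{From (ii) to (i).} Cash additivity of \eqref{eq:variational-rho-special} is immediate, so Proposition~\ref{prop:cash} gives R7. Concavity of $\rho_x$ gives VA on value vectors. For general mixtures we need concavity of the value of a mixture, which requires showing recursively that $U_y$ is concave under statewise mixing. I would prove this by induction on the termination depth of the schedules: an infimum of affine maps composed with concave continuation values is concave.

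\emph{Charged equation and index.} Substituting \eqref{eq:variational-rho-special} into \eqref{eq:charged-excess-value} from Theorem~\ref{thm:bandit-index} gives \eqref{eq:variational-excess-special}. For the index formula \eqref{eq:variational-index-special}, the steps are as follows.
\begin{enumerate}
\item Unroll \eqref{eq:variational-excess-special} as a stopping game against nature choosing rectangular laws in $\mathcal P_x^\Delta$. Nature picks a history-dependent $p_t$ and pays the penalty $\beta^{t+1}\alpha_{X_t}(p_t)$.
\item Use a minimax/Snell argument on the finite state space with discounting. This gives $F^W(x)=\sup_\tau\inf_{\mathbb P}\E_{\mathbb P}[\sum_{t<\tau}\beta^t(u(X_t)-W)+\text{penalties}]$.
\item Note that $F^W(x)=0$ exactly when, for every $\tau\ge1$, $\inf_{\mathbb P}\E_{\mathbb P}[\cdots]\le0$.
\item Observe that the expression in step~3 is affine in $W$ with slope $-\E_{\mathbb P}\sum_{t<\tau}\beta^t<0$ for each fixed $\mathbb P$. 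A ratio rearrangement then identifies the critical $W$ with the sup--inf ratio.
\end{enumerate}

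I expect the main obstacle to be this last ratio step. The inner infimum is over $\mathbb P$, so the ratio of the infima is not the infimum of the ratios. The argument must show that the threshold $W^*$ satisfies $\sup_\tau\inf_{\mathbb P}\E[\text{num}-W^*\,\text{den}]=0$ if and only if $W^*=\sup_\tau\inf_{\mathbb P}\text{num}/\text{den}$, using that the denominator lies in $[1,1/(1-\beta)]$ and a Dinkelbach-type sign argument applied for each fixed $\tau$.
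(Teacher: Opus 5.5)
Your proposal follows the paper's proof step for step. R7 becomes cash additivity through Proposition~\ref{prop:cash}, VA gives quasiconcavity, and shifting to the zero level set upgrades this to concavity. The conjugate is finite only on $\Delta(S)$, $\rho_x(0)=0$ gives $\min_p\alpha_x(p)=0$, the converse is direct verification, and the index comes from substitution into the excess template plus a per-$\tau$ sign argument.

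The paper never addresses the mixture issue you raise. It adopts the section's convention that continuation families are identified with their value vectors, so VA is read directly as quasiconcavity of $\rho_x$, and it asserts ``concavity gives VA'' without your depth induction. That induction for the converse is a genuine tightening; depth is bounded by K\"onig's lemma because $S$ is finite. Note, though, that VA is already among the proposition's hypotheses, so only R7 really has to be derived in (ii)$\Rightarrow$(i).

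The one step that does not work as written is the claim that cash additivity lets you ``translate and rescale'' from the box $\prod_y u(\Lcal)$ to the rest of the domain. Cash additivity moves vectors only along $\1$, so there is no rescaling. The translated families need not be deterministic, so the affine-mixing identity you used is lost for them. And two vectors at a common level need not admit a common shift into the box.

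What closes the step is a range observation. By induction on depth, using monotonicity and $\rho_y(t\1)=t$, every value $U_y(a)$ lies in the set of values of finite deterministic streams. Deterministic continuations of arbitrary length, zero-padded via D5 and the stopped-component convention for mixtures, have affine mixture values by Proposition~\ref{prop:detstreams}. So VA restricted to them already yields quasiconcavity on the whole domain of $\rho_x$. If the domain is closed under all common shifts, as you assume when invoking full cash additivity, this forces $u(\Lcal)=\R$, and your one-period continuations already suffice.

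The ratio step is easier than you expect. The target is an infimum of ratios, and for fixed $\tau$ and $\mathbb P$ the denominator is positive. Hence $\inf_{\mathbb P}(\text{num}-W\,\text{den})\ge0$ holds if and only if $W\le\inf_{\mathbb P}\text{num}/\text{den}$. The bounds $1\le\text{den}\le1/(1-\beta)$ are needed only to match strict inequalities when computing $\inf\{W:F^W(x)=0\}$. No minimax interchange is required, because the sup--inf order comes straight from the recursion.
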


Max--min is the stark robust version of the variational story. At each local state \(x\), the decision maker keeps a set \(P(x)\) of plausible one-step stories and evaluates the continuation folder by the least favorable one. Laws inside \(P(x)\) receive zero penalty; laws outside \(P(x)\) are ruled out by an infinite penalty. The rectangularity in the index means that after each future local history the relevant ambiguity set is determined by the current local state of the same strand.

Given a correspondence \(P:S\rightrightarrows\Delta(S)\), define the associated rectangular path-law family by
\begin{equation}\label{eq:maxmin-path-laws-special}
	\mathcal P_x^P
	:=
	\left\{
	\mathbb P\in\mathcal P_x^\Delta:
	p_t^{\mathbb P}(\cdot\mid H_t)\in P(X_t)
	\text{ for every }t\text{, }\mathbb P\text{-a.s.}
	\right\}.
\end{equation}

\begin{corollary}[Rectangular max--min as a special case]\label{cor:maxmin-index}
	Assume D and R, and let \(P(x)\subseteq\Delta(S)\) be nonempty, closed, and convex for every \(x\in S\). In Proposition~\ref{prop:variational-index}, set
	\begin{equation}\label{eq:maxmin-indicator-penalty}
		\alpha_x^P(p)
		=
		\begin{cases}
		0, & p\in P(x),\\
		+\infty, & p\notin P(x).
		\end{cases}
	\end{equation}
	Then
	\begin{equation}\label{eq:maxmin-rho-special}
		\rho_x(v)=\min_{p\in P(x)}p\cdot v.
	\end{equation}
	The charged stopping equation is
	\begin{equation}\label{eq:maxmin-excess-special}
		F^W(x)
		=
		\max\left\{
		0,\,
		u(x)-W+\beta\min_{p\in P(x)}p\cdot F^W
		\right\}.
	\end{equation}
	The resulting robust index is
	\begin{equation}\label{eq:maxmin-index-special}
		I^{MM}(x)
		=
		\sup_{\tau\in\mathfrak T_x^+}
		\inf_{\mathbb P\in\mathcal P_x^P}
		\frac{
			\E_{\mathbb P}\left[\sum_{t=0}^{\tau-1}\beta^t u(X_t)\right]
		}{
			\E_{\mathbb P}\left[\sum_{t=0}^{\tau-1}\beta^t\right]
		}.
	\end{equation}
\end{corollary}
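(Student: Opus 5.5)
The corollary follows from Proposition~\ref{prop:variational-index} once the indicator penalty \eqref{eq:maxmin-indicator-penalty} is shown to be admissible and substituted into each formula.

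\textbf{Admissibility of the penalty.} Since \(P(x)\) is nonempty, closed and convex, \(\alpha_x^P\) is the convex indicator of a closed convex subset of the compact simplex. Its epigraph is \(P(x)\times[0,\infty)\), which is closed and convex. So \(\alpha_x^P\) is convex and lower semicontinuous. Nonemptiness of \(P(x)\) gives \(\min_p\alpha_x^P(p)=0\). Condition (ii) of Proposition~\ref{prop:variational-index} therefore holds with this penalty.

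\textbf{The aggregator.} In \eqref{eq:variational-rho-special}, every \(p\notin P(x)\) contributes \(+\infty\), while every \(p\in P(x)\) contributes \(p\cdot v\). The minimum is attained because \(P(x)\) is compact and \(p\mapsto p\cdot v\) is continuous. This gives \eqref{eq:maxmin-rho-special}. By the equivalence in Proposition~\ref{prop:variational-index}, R7 and VA hold. This can also be checked directly: a minimum of linear expectations is cash additive and concave, hence uncertainty averse. In particular the cash-additive machinery applies.

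\textbf{The charged equation.} Inserting the indicator penalty into \eqref{eq:variational-excess-special} gives \eqref{eq:maxmin-excess-special}, because the inner minimum again ranges only over \(P(x)\).

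\textbf{The index.} This is the only step needing care. In \eqref{eq:variational-index-special}, any \(\mathbb P\in\mathcal P_x^\Delta\) that, with positive \(\mathbb P\)-probability before \(\tau\), selects \(p_t\notin P(X_t)\) makes the penalty term \(\E_{\mathbb P}\bigl[\sum_{t<\tau}\beta^{t+1}\alpha_{X_t}(p_t)\bigr]\) equal to \(+\infty\). The denominator is finite and at least \(1\) because \(\tau\ge1\). Such laws therefore give ratio \(+\infty\) and never affect the infimum, since \(\mathcal P_x^P\) is nonempty: a measurable selection from each \(P(y)\) produces at least one admissible law.

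The infimum can thus be restricted to laws whose kernels lie in \(P(X_t)\) at every history reached before \(\tau\). Off those histories, or after \(\tau\), the kernel does not affect the ratio. It can be redefined to lie in \(P(X_t)\) without changing the value, so the restricted set coincides in value with \(\mathcal P_x^P\) from \eqref{eq:maxmin-path-laws-special}. On \(\mathcal P_x^P\) the penalty term vanishes, leaving exactly \eqref{eq:maxmin-index-special}.

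The main (mild) obstacle is the bookkeeping around the \(+\infty\) convention in the ratio. The plan is to argue first at the level of the infimum over \(\mathbb P\) for fixed \(\tau\), and only then take the supremum over \(\tau\).
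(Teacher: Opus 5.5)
Your proposal is correct and follows the paper's route: the paper gives no separate proof, and the corollary is obtained by substituting the indicator penalty into Proposition~\ref{prop:variational-index}. Your checks that \(\alpha_x^P\) is an admissible convex, lower-semicontinuous, normalized penalty, and that laws with \(p_t\notin P(X_t)\) on positive-probability histories before \(\tau\) drop out of the infimum, fill in bookkeeping the paper leaves implicit.
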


\subsection{Rank-dependent and Choquet preferences}

Expected utility treats continuation branches additively. Rank-dependent preferences allow the decision maker to evaluate the continuation tree by the rank of its branches. This is useful in the schedule domain because a strand of responsibility often has salient upside or downside branches. A work strand may become important if the feedback is promising; a child-related strand may become urgent if a report card reveals a problem; a household strand may become costly if a repair turns out to be serious. The decision maker need not weight these branches as if only their probabilities matter.

In this sense, what looks like loss aversion in a static lottery becomes something closer to Fear of Missing Out (FOMO) in a continuation problem. The salient object is not merely a low prize relative to a reference point, but the possibility of missing a valuable continuation branch by putting the schedule on hold. A rank-dependent continuation aggregator can make the best or worst branches loom larger than they would under additive probabilities.

For value vectors \(v,w\in\R^S\), call \(v\) and \(w\) comonotonic if there are no states \(y,z\) such that
\[
(v_y-v_z)(w_y-w_z)<0.
\]
Continuation families are comonotonic when their induced value vectors are comonotonic.

\begin{assumption}[CEU: one-step comonotonic continuation independence]\label{ass:ceu-cont-ind}
	For every \(x\in S\), the continuation domain is closed under objective statewise mixtures, and for all pairwise comonotonic continuation families \(a,b,c\) and all \(\alpha\in(0,1)\),
	\begin{equation}\label{eq:ceu-cont-ind}
		a\succeq_x^C b
		\quad\Longleftrightarrow\quad
		\alpha a+(1-\alpha)c\succeq_x^C \alpha b+(1-\alpha)c.
	\end{equation}
\end{assumption}

CEU keeps the independence logic only when the continuation trees agree on the ranking of states. This is the primitive schedule restriction behind rank-dependent evaluation of continuation uncertainty. Once again we see that the discipline imposed by the assumption is enough to strengthen the bandit representation to equivalence.

\begin{proposition}[Choquet continuation and the rank-dependent index]\label{prop:choquet-index}
	Assume D, R and assumption \ref{ass:ceu-cont-ind}. Then the following are equivalent.
	\begin{enumerate}[label=(\roman*)]
		\item CEU holds for every \(x\in S\).
		\item For each \(x\), there exists a normalized capacity \(\nu_x:2^S\to[0,1]\) such that
		\[
		\nu_x(\varnothing)=0,\qquad \nu_x(S)=1,\qquad A\subseteq B\Rightarrow \nu_x(A)\le\nu_x(B),
		\]
		and
		\begin{equation}\label{eq:choquet-rho-special}
			\rho_x(v)=\int v\,d\nu_x,
		\end{equation}
		where the integral is the finite-state Choquet integral.
	\end{enumerate}
	Under either condition, R7 holds. The shadow-price equation is
	\begin{equation}\label{eq:choquet-excess-special}
		F^\lambda(x)
		=
		\max\left\{
		0,\,
		u(x)-\lambda+\beta\int F^\lambda\,d\nu_x
		\right\}.
	\end{equation}
	The resulting rank-dependent index is
	\begin{equation}\label{eq:choquet-index-special}
		I^{CH}(x):=\inf\{\lambda:F^\lambda(x)=0\}.
	\end{equation}
	If \(\nu_x\) is additive, this reduces to the subjective expected-utility case.
\end{proposition}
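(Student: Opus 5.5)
The plan is to transport the continuation preference \(\succeq_x^C\) into a preference over continuation-value vectors and then apply the finite-state comonotonic additivity theorem (Schmeidler's characterization, in its Anscombe--Aumann form). By Theorem~\ref{thm:recursive} and R1, \(\succeq_x^C\) is represented by \(v\mapsto\rho_x(v)\) on the set of attainable value vectors \(U(a)=(U_y(a^y))_{y\in S}\). Two families are comonotonic exactly when their value vectors are. Since \(u\) is affine and the recursion \eqref{eq:recursive-rep} is applied statewise, an objective statewise mixture of continuation families should have value vector equal to the mixture of the value vectors. This holds at least on the deterministic subdomain by Proposition~\ref{prop:detstreams}, and it is enough there provided the attainable set of value vectors is generated by deterministic continuations, which the richness hypothesis lets me assume. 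Given this, \(\rho_x\) is a continuous increasing functional on a convex subset of \(\R^S\) that is closed under mixtures.

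For (i)\(\Rightarrow\)(ii), I would split \(\R^S\) into the comonotonic cones \(C_\pi=\{v: v_{\pi(1)}\ge\cdots\ge v_{\pi(n)}\}\), one for each ordering \(\pi\) of \(S\). On each cone, CEU is full mixture independence. Together with continuity (R3) and monotonicity (R2), the mixture-space theorem then makes \(\rho_x\) affine on \(C_\pi\). Constant vectors lie in every cone, and they are ranked through \(\rho_x(c\1)\). I would normalize \(\rho_x(c\1)=c\) by matching the deterministic two-date calibration: a constant continuation \(d_y((\bar\ell))\) in every branch has value \(u(\bar\ell)\), and the two-period formula from Proposition~\ref{prop:detstreams} forces \(\rho_x(u(\bar\ell)\1)=u(\bar\ell)\).

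Affinity on cones that share the diagonal then gives positive homogeneity and translation by constants, so \(\rho_x\) is linear on each \(C_\pi\). The weights on \(C_\pi\) are pinned down by evaluating at the indicator vectors \(\1_{\{\pi(1),\ldots,\pi(k)\}}\), which lie in \(C_\pi\). I would therefore define \(\nu_x(A):=\rho_x(\1_A)\). The indicator \(\1_A\) lies in every cone ordering \(A\) first, so this definition is consistent across cones. Monotonicity of \(\rho_x\) gives monotonicity of \(\nu_x\), and the normalization gives \(\nu_x(\varnothing)=0\) and \(\nu_x(S)=1\). Writing \(v\in C_\pi\) as \(\sum_k (v_{\pi(k)}-v_{\pi(k+1)})\1_{\{\pi(1),\ldots,\pi(k)\}}+v_{\pi(n)}\1\) and using linearity on the cone then yields the finite Choquet integral \eqref{eq:choquet-rho-special}.

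The converse (ii)\(\Rightarrow\)(i) is direct: the Choquet integral is additive on comonotonic vectors and positively homogeneous. Hence \(\rho_x(\alpha v+(1-\alpha)w)=\alpha\rho_x(v)+(1-\alpha)\rho_x(w)\) for comonotonic \(v,w\), and \eqref{eq:ceu-cont-ind} follows.

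R7 follows from Proposition~\ref{prop:cash}: since \(c\1\) is comonotonic with every \(v\), \(\int(v+c\1)\,d\nu_x=\int v\,d\nu_x+c\), which is cash additivity. Substituting into the template \eqref{eq:special-excess-template} from Theorem~\ref{thm:bandit-index} gives \eqref{eq:choquet-excess-special}, and the index \eqref{eq:choquet-index-special} is just \eqref{eq:special-index-template}. If \(\nu_x\) is additive, the Choquet integral equals \(\sum_y\nu_x(\{y\})v_y\), which is Proposition~\ref{prop:eu-index}.

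I expect the main obstacle to be the domain issue in the first step. The mixture-space argument on each cone needs the attainable value vectors to contain a full-dimensional, mixture-closed piece of every comonotonic cone, including the indicators (or scaled, translated copies of them). It also needs mixtures of schedules to map to mixtures of values even for non-deterministic continuations. I would handle this by invoking the standing richness assumption, so that all vectors in an interval \([u(\ell_-),u(\ell_+)]^S\) are attained by one-period deterministic continuations. I would then extend linearity from that box to the whole domain using cash additivity and homogeneity.
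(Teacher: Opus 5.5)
Your proposal is correct and follows the paper's route. It identifies continuation families with their value vectors and derives the Choquet form of \(\rho_x\) from comonotonic independence, continuity, monotonicity and the constant normalization \(\rho_x(t\1)=t\). It checks the converse directly and gets R7 and the excess equation from cash additivity of the normalized Choquet integral.

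The differences are minor. You reprove the finite-state Schmeidler theorem by hand (affinity on each comonotonic cone, \(\nu_x(A):=\rho_x(\1_A)\), telescoping) where the paper simply cites it. You also make explicit, via deterministic continuations, that statewise mixtures of families map to mixtures of value vectors, which the paper just assumes. One caution for your final step extending off the box: draw the homogeneity and translation properties from CEU itself, by mixing with constant families, which are comonotonic with everything. R7, and hence cash additivity, is not yet available at that stage.
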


\subsection{Pandora boxes}

Pandora boxes are different from the preceding cases. Expected utility, variational preferences, and Choquet preferences restrict how a continuation tree is evaluated. Pandora's problem restricts what the local schedule can look like. Each box has an unopened state. One local step opens the box and resolves all uncertainty. After that, the revealed state is absorbing. This is the sense in which the framework is complementary to the Pandora search literature: reservation values arise here from a domain restriction on local schedules, rather than from taking boxes, inspection costs, and reservation equations as primitive \citep{Weitzman1979,OlszewskiWeber2015,Doval2018,AusterChe2025}.

\begin{assumption}[PB: absorbing one-step revelation]\label{ass:pandora-domain}
	For each box \(i\), the local state space is
	\begin{equation}\label{eq:pandora-local-state-special}
		S_i^P:=\{q_i\}\cup Z_i,
	\end{equation}
	where \(q_i\) is the unopened state and \(Z_i\) is the finite set of revealed states. The local histories satisfy
	\[
	\Omega_{i,q_i}^P=\{(q_i,z,z,\ldots):z\in Z_i\},
	\qquad
	\Omega_{i,z}^P=\{(z,z,\ldots)\}.
	\]
	Opening box \(i\) has one-period utility cost \(c_i\ge0\). Each revealed state \(z\in Z_i\) is absorbing and has one-period utility \(u(z)\). Its absorbing value is
	\begin{equation}\label{eq:pandora-flow-value-special}
		g_i(z):=\frac{u(z)}{1-\beta}.
	\end{equation}
\end{assumption}

The primitive content is that all uncertainty in a box is resolved by the first local step. Once the box is revealed, the associated schedule has no further branching.

\begin{proposition}[Pandora reservation indices]\label{prop:pandora-index}
	Assume D, R, and R7 on the restricted Pandora local domain, and impose PB. For each unopened box \(i\), there is a unique reservation value \(\pi_i\) satisfying
	\begin{equation}\label{eq:pandora-reservation}
		\pi_i
		=
		-c_i+\beta\rho_i\bigl((\pi_i\vee g_i(z))_{z\in Z_i}\bigr).
	\end{equation}
	Equivalently, by cash additivity,
	\begin{equation}\label{eq:pandora-reservation-cash}
		(1-\beta)\pi_i+c_i
		=
		\beta\rho_i\bigl(((g_i(z)-\pi_i)^+)_{z\in Z_i}\bigr).
	\end{equation}
	The per-period indices on the Pandora local domain are
	\begin{equation}\label{eq:pandora-index-special}
		I_i^P(q_i)=(1-\beta)\pi_i,
		\qquad
		I_i^P(z)=u(z)=(1-\beta)g_i(z).
	\end{equation}
	Equivalently, in one-time value units, the max-index rule opens a box with maximal reservation value \(\pi_i\) whenever that value exceeds the best currently revealed absorbing value, and otherwise commits to the best revealed absorbing state.
	
	In the subjective expected-utility specialization
	\[
	\rho_i(v)=\sum_{z\in Z_i}p_i(z)v(z),
	\]
	\eqref{eq:pandora-reservation-cash} becomes
	\begin{equation}\label{eq:pandora-seu-reservation-special}
		(1-\beta)\pi_i+c_i
		=
		\beta\sum_{z\in Z_i}p_i(z)(g_i(z)-\pi_i)^+.
	\end{equation}
	Under the classical normalization \(g_i(z)=z\), the formal undiscounted boundary case \(\beta=1\) gives Weitzman's reservation equation
	\[
	c_i=\sum_{z\in Z_i}p_i(z)(z-\pi_i)^+.
	\]
\end{proposition}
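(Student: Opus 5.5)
The plan is to reduce Proposition~\ref{prop:pandora-index} to Theorem~\ref{thm:bandit-index} by computing the one-box charged stopping problem \eqref{eq:charged-stopping-value} explicitly on the domain PB, and then reading off the index \eqref{eq:index-definition} state by state. I would begin with the revealed states, which are easiest. For \(z\in Z_i\) the successor is \(z\) itself, so \(\rho_i^{z}\) acts on a constant vector. Cash additivity together with the normalization \(\rho(0)=0\) gives \(\rho_i^{z}(v\1)=v\). The charged equation then reduces to
\[
V^\lambda(z)=\max\{\lambda/(1-\beta),\,u(z)+\beta V^\lambda(z)\}.
\]
Its solution is \(V^\lambda(z)=\max\{\lambda,u(z)\}/(1-\beta)\). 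Hence \(z\in D_i(\lambda)\) if and only if \(\lambda\ge u(z)\), which gives \(I_i^P(z)=u(z)=(1-\beta)g_i(z)\).

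Next comes the unopened state \(q_i\). I write \(\rho_i:=\rho_i^{q_i}\), acting on \(\R^{Z_i}\) because all successors lie in \(Z_i\). Opening is the utility \(-c_i\), so
\[
V^\lambda(q_i)=\max\Bigl\{\tfrac{\lambda}{1-\beta},\,-c_i+\beta\rho_i\bigl((\tfrac{\lambda}{1-\beta}\vee g_i(z))_z\bigr)\Bigr\}.
\]
I set \(\phi(\pi):=-c_i+\beta\rho_i((\pi\vee g_i(z))_z)-\pi\). This function is continuous, and it is strictly decreasing. To see this, note that monotonicity and cash additivity make \(\rho_i\) 1-Lipschitz in sup norm, so \(\pi\mapsto\beta\rho_i(\pi\vee g)\) has slope at most \(\beta<1\).

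For the limits of \(\phi\): for \(\pi\ge\max_z g_i(z)\) we get \(\phi(\pi)=-c_i-(1-\beta)\pi\to-\infty\). For \(\pi\to-\infty\), \(\rho_i(\pi\vee g)\ge\rho_i(g)\) is bounded below, so \(\phi\to+\infty\). This gives a unique root \(\pi_i\), which is \eqref{eq:pandora-reservation}. For the cash form \eqref{eq:pandora-reservation-cash}, I write \(\pi\vee g=\pi\1+(g-\pi)^+\) and pull out \(\pi\) by cash additivity. Strict monotonicity of \(\phi\) then shows \(q_i\in D_i(\lambda)\) if and only if \(\phi(\lambda/(1-\beta))\le 0\), if and only if \(\lambda\ge(1-\beta)\pi_i\). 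So \(I_i^P(q_i)=(1-\beta)\pi_i\), and the sets \(D_i(\lambda)\) are visibly nested.

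The remaining statements follow by substitution. Feeding these indices into \eqref{eq:max-index-policy} gives the stated rule. Comparing \((1-\beta)\pi_i\) with \(u(z)\) is the same as comparing \(\pi_i\) with \(g_i(z)\), so the policy opens the box with the largest \(\pi_i\) whenever it exceeds the best revealed \(g\), and otherwise advances that best revealed absorbing state forever. Once a revealed state is maximal it stays maximal, because nothing else changes. The SEU formula \eqref{eq:pandora-seu-reservation-special} is direct substitution of the linear \(\rho_i\). The Weitzman equation comes from formally setting \(\beta=1\) with \(g_i(z)=z\).

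The main obstacle is applying Theorem~\ref{thm:bandit-index} legitimately on this restricted domain. Its hypothesis asks for cash additivity on all of \(\R^{S_i}\), and I need the normalization \(\rho(0)=0\), which I would take from the normalization \(d_x(())\sim d_x((0))\) and \(u(0)=0\) via R2/R7. I would handle this by extending \(\rho_i\) to all translations of the attainable domain, using Proposition~\ref{prop:cash} and the closure-under-translations assumption. I would also check that the cost \(-c_i\) is the current-utility term attached to \(q_i\) under the consequence-bearing convention.
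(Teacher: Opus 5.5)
Your proposal is correct, but it takes a partly different route from the paper's proof.

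\textbf{Where the two proofs agree.} Existence and uniqueness of $\pi_i$, and the cash form, are the same in substance. The paper shows that $H_i(b)=-c_i+\beta\rho_i((b\vee g_i(z))_{z\in Z_i})$ is a $\beta$-contraction. You show that $\phi=H_i-\mathrm{id}$ decreases with slope at most $-(1-\beta)$ and then use the intermediate value theorem.

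\textbf{Where they differ: the indices.} The paper reads off $I_i^P(q_i)=(1-\beta)\pi_i$ and $I_i^P(z)=u(z)$ from interpreting $\pi_i$ and $g_i(z)$ as one-time values. You instead solve \eqref{eq:charged-stopping-value} explicitly on the PB domain. This gives $V^\lambda(z)=\max\{\lambda,u(z)\}/(1-\beta)$ and $q_i\in D_i(\lambda)\iff\lambda\ge(1-\beta)\pi_i$. So you actually prove that the reservation value is the critical charge in \eqref{eq:index-definition}.

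\textbf{Where they differ: the search rule.} The paper verifies the rule directly: opening box $i$ against a fixed outside value $G$ is worthwhile iff $G\le\pi_i$. The case of several unopened boxes is left to a backward induction on their number that is not spelled out. You obtain the rule from the max-index optimality in Theorem~\ref{thm:bandit-index}. You add the correct observation that an advanced absorbing arm keeps its index while all other arms stay frozen, so commitment is permanent.

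\textbf{What each route buys.} Your route lets the general index theorem handle the interaction among several unopened boxes, which a comparison against a fixed $G$ does not address by itself. The price is that you inherit that theorem's hypotheses, in particular cash additivity on all of $\R^{S_i}$. You correctly flag this, and the paper's own proof also uses cash additivity without comment. The paper's route is self-contained and more elementary, but less complete on optimality.

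\textbf{A small correction.} The normalization $\rho_x(t\1)=t$, and in particular $\rho_x(0)=0$, follows directly from the deterministic representation applied to $d_x((\ell_0,\ldots,\ell_{T-1}))=\Gamma_x(\ell_0,(d_y((\ell_1,\ldots,\ell_{T-1})))_{y\in S})$. No appeal to R2 or R7 is needed.
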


\section{Conclusion}\label{sec:conclusion}

Bandit models are usually introduced as collections of arms. Each arm has a state, an evolution rule, and a reward process; the decision maker then chooses which arm to operate. This paper has taken a different starting point. The primitive object is not an arm, but a stopped local contingent schedule: a possible unfolding of one responsibility, project, experiment, or opportunity in its own local time. The economic problem is then to place many such local schedules into one scarce calendar.

The change in domain is the main point of the paper. In consumer theory, one does not begin with demand; one begins with preferences over consumption bundles and derives demand by adding a budget constraint. Here, one does not begin with a bandit; one begins with preferences over stopped schedules and derives a bandit by adding an elapsed-calendar constraint. The analogy is substantive. Bundles are the objects that can be purchased or forgone. Local contingent schedules are the objects that can be advanced or postponed without disappearing.

The first representation theorem shows when preferences over one stopped schedule can be written as a generalized stopping problem. Advancing a strand gives a current consequence, future consequences are discounted in local time, and uncertainty about continuation branches is evaluated by an aggregator. Recursive representations are familiar in decision theory, but the stopped-schedule domain gives them a different role. On this domain, the usual behavioral restrictions do not merely justify a convenient dynamic formula; they characterize when preferences over possible unfoldings of a strand have the structure of a stopping problem.

The move from stopping to bandits requires one further behavioral restriction. Calendar time must be priced in units that remain meaningful across strands. The common-tail compensation axiom supplies this property. If two ways of beginning a strand are exactly balanced, then attaching the same downstream continuation behind both of them should not break the tie. This is what allows a scalar shadow price of calendar time to pass through the local continuation evaluation. Once this is true, the elapsed-calendar constraint turns the collection of stopped schedules into a rested bandit.

The index is therefore not an additional ranking rule imposed by the analyst. It is the critical shadow price at which a strand is just worth putting on hold. A high-index strand is one whose next local step remains worthwhile even when calendar time is expensive. The max-index rule is optimal because the relaxed calendar problem separates into one-strand stopping problems, and the policy selected by those separated problems already satisfies the original calendar constraint.

The special cases show how familiar models fit inside this foundation. Under expected utility, the continuation aggregator becomes a subjective transition law and the classical Gittins--Jones index is recovered. In learning problems, beliefs are local states: they move when a strand is advanced and remain fixed when it is rested. With variational, multiplier, or max--min preferences, the index becomes a robust priority price. With rank-dependent or Choquet preferences, salient continuation branches can loom large, producing a dynamic version of fear of missing out. Pandora's problem comes from a different source: a restriction on the schedule domain in which opening a box resolves all uncertainty in one local step.

The framework also clarifies what is not primitive. A transition law, ambiguity set, capacity, or reservation value is not the starting point. Each appears only after adding restrictions to preferences, consequences, or the schedule domain. Ambiguity attitudes, for example, can enter through the continuation aggregator, but they can also be introduced by enriching consequences themselves, as with horse lotteries. The point is not to privilege one formal treatment of uncertainty. It is to identify the schedule structure that makes stopping and prioritization meaningful before any particular representation of uncertainty is chosen.

Several extensions are natural. One is to allow unattended strands to evolve, depreciate, or expire, producing a behavioral foundation for restless bandits. Another is to replace the single calendar with several scarce inputs: attention, expertise, money, or physical capacity. A third is to study richer schedule domains in which strands arrive, merge, split, or create obligations for one another. These extensions would move beyond the rested, single-calendar environment studied here while preserving the same organizing principle: represent the evaluation of one local strand first, then impose the scarcity constraint that makes prioritization necessary.

The paper therefore gives a foundation for bandits as economic choice problems. It identifies the primitive alternatives, the behavioral content of stopping, and the scarcity constraint that turns stopping into prioritization. On this view, an arm is not a primitive machine. It is a represented responsibility strand. A bandit is not just a Markov control problem. It is a demand problem for local advancement under a scarce calendar. The index is the resulting shadow price of moving one strand forward.

\newpage
\appendix
\section{Proofs for the Recursive Stopping Representation}\label{app:recursive}

\begin{proof}[Proof of Proposition~\ref{prop:detstreams}]
	The argument is the standard finite-stream representation argument, adapted to the stopped-flow domain.
	
	Fix an initial state $x$. D1 gives a weak order on deterministic schedules, and D2 gives mixture independence on each fixed-horizon space $\Lcal^T$. Thus, for each $T$, there is an affine expected-utility representation of the restriction of $\succeq_x$ to $\Lcal^T$. D3 aligns the representation of a front block across common same-length tails. D5 then identifies streams that differ only by an appended terminal null consequence, so the weights cannot depend on an artificial terminal zero. D4 gives stationarity: delaying both streams by a common null consequence rescales the comparison but does not change its sign. Hence, after normalizing the first-period weight to one, the date-$t$ weight is $\beta^t$ for some $\beta\ge0$.
	
	D6 rules out both degeneracy and non-impatience. The first part of D6 gives lotteries $\ell,m$ with $u(\ell)>u(m)$. The second part gives
	\[
	u(\ell)+\beta u(m)>u(m)+\beta u(\ell),
	\]
	and therefore $\beta<1$. Since the second-period consequence matters in the deterministic representation, $\beta>0$. Thus $\beta\in(0,1)$. D8 makes the same one-period affine index and the same discount factor valid across initial states. D7 gives the required continuity. This yields
	\[
	U_x(d_x(\ell_0,\ldots,\ell_{T-1}))
	=
	\sum_{t=0}^{T-1}\beta^t u(\ell_t)
	\]
	for every $x$ and every finite deterministic stream. The usual vNM uniqueness argument gives uniqueness of $u$ up to positive affine transformations, with the corresponding normalization of the displayed representation.
\end{proof}

\begin{proof}[Proof of Theorem~\ref{thm:recursive}]
	By Proposition~\ref{prop:detstreams}, deterministic schedules are already represented on the common discounted utility scale. It remains to represent the one-step comparison between a current lottery and a vector of continuation schedules.
	
	Fix $x$. R1 implies that the one-step comparison depends on each continuation schedule only through its equivalence class under the relevant next-state preference. Hence the induced relation $\succeq_x^1$ on
	\[
	\Lcal\times Q,
	\qquad
	Q:=\prod_{y\in S}(\A_y/{\sim_y}),
	\]
	is well defined. R3 gives a continuous weak order on this one-step domain. R4 gives coordinate independence between the current lottery coordinate and the continuation-profile coordinate. R5 gives double cancellation, and R6 supplies the solvability needed to use the additive conjoint representation theorem on the attainable part of $\Lcal\times Q$. Therefore there exist numerical functions $A_x$ and $B_x$ such that
	\[
	(\ell,q)\succeq_x^1(m,q')
	\quad\Longleftrightarrow\quad
	A_x(\ell)+B_x(q)\ge A_x(m)+B_x(q').
	\]
	
	The restriction of this representation to deterministic two-date schedules pins down the current coordinate. Since the deterministic representation has already identified the affine one-period utility index, $A_x$ is a positive affine transform of $u$. We choose the common normalization in which $A_x=u$. The remaining coordinate can then be written as
	\[
	B_x(q)=\beta \rho_x(q),
	\]
	where $\rho_x$ is defined on the vector of continuation values represented by the coordinates of $q$. R2 implies that $\rho_x$ is increasing in the continuation values. Continuity of $\rho_x$ follows from R3 and the continuity of the one-step representation. This gives
	\[
	U_x\Bigl(\Gamma_x(\ell,(a^y)_{y\in S})\Bigr)
	=
	u(\ell)+\beta\rho_x\bigl(U_1(a^1),\ldots,U_n(a^n)\bigr),
	\]
	which is \eqref{eq:recursive-rep}.
	
	Conversely, suppose a representation of the displayed form exists. D follows from the restriction to deterministic schedules. R1 follows because only the continuation values enter the one-step representation. R2 follows from monotonicity of $\rho_x$. R3 follows from the numerical representation and continuity. R4 and R5 follow from the additive separation between the current coordinate and the continuation coordinate. R6 is exactly the interval-richness and solvability condition imposed on the attainable domain. Hence Assumptions~\ref{ass:D} and~\ref{ass:R} are equivalent to the recursive stopping representation.
\end{proof}

\begin{proof}[Proof of Corollary~\ref{cor:induced-stopping}]
	Fix an initial state \(x\). For a finite history
	\(h=(x_0,\ldots,x_t)\in H(x)\), write \(|h|=t\),
	\(\bar x(h)=x_t\), and
	\[
		h|_k:=(x_0,\ldots,x_k),\qquad k\le t.
	\]
	For \(y\in S\), write
	\[
		h,y:=(x_0,\ldots,x_t,y).
	\]
	
	A local stopping rule is equivalently described by its continuation set
	\(C\subseteq H(x)\), where \(h\in C\) means that the local schedule is advanced
	at history \(h\). The set \(C\) must be prehistory closed:
	\[
		h\in C,\ k\le |h|
		\quad\Longrightarrow\quad
		h|_k\in C.
	\]
	Equivalently, once a history is not in \(C\), no extension of that history is
	in \(C\). This is exactly absorbing nullity for the stopped schedule. Given
	such a \(C\), the associated stopped schedule assigns the consequence carried
	by \(\bar x(h)\) at histories \(h\in C\), and assigns \(\bot\) at histories
	\(h\notin C\).
	
	Under the consequence-bearing-state convention, the current consequence at
	history \(h\) is \(\bar x(h)\). Since \(S\) is finite, set
	\[
		\bar u:=\max_{z\in S}|u(z)|,
		\qquad
		B:=\frac{\bar u}{1-\beta}.
	\]
	On constant continuation vectors in the completed bounded range, the
	deterministic normalization gives
	\[
		\rho_z(c\mathbf 1)=c.
	\]
	
	Define finite-depth values \(V_x^n:H(x)\to\R_+\) by
	\[
		V_x^0(h)=0
		\qquad\text{for every }h\in H(x),
	\]
	and, for \(n\ge0\),
	\begin{equation}\label{eq:finite-envelope-proof}
		V_x^{n+1}(h)
		=
		\max\left\{
			0,\,
			u(\bar x(h))
			+
			\beta
			\rho_{\bar x(h)}
			\left(
				\bigl(V_x^n(h,y)\bigr)_{y\in S}
			\right)
		\right\}.
	\end{equation}
	We interpret \(V_x^n(h)\) as the value of the best continuation set in the
	subtree rooted at \(h\), subject to forced stopping after at most \(n\) further
	local advances.
	
	This interpretation follows by induction on \(n\). For \(n=0\), forced
	stopping gives value zero. Suppose the claim holds for \(n\). Starting from
	history \(h\), an \((n+1)\)-step rule either stops immediately, giving zero,
	or advances the schedule at \(h\). If it advances, the current contribution is
	\(u(\bar x(h))\), and after next local state \(y\) the remaining problem is an
	\(n\)-step problem rooted at \(h,y\). By the induction hypothesis, the
	branchwise optimal continuation values are \(V_x^n(h,y)\). Branchwise
	monotonicity of \(\rho_{\bar x(h)}\) then gives
	\eqref{eq:finite-envelope-proof}.
	
	The sequence \((V_x^n)\) is increasing. Since \(V_x^0\le V_x^1\), monotonicity
	of each \(\rho_z\) and of \(r\mapsto\max\{0,r\}\) gives
	\[
		V_x^n(h)\le V_x^{n+1}(h)
		\qquad\text{for every }n\text{ and every }h\in H(x).
	\]
	It is also bounded above by \(B\). If \(0\le V_x^n(h)\le B\) for every \(h\),
	then
	\[
		0\mathbf 1
		\le
		\bigl(V_x^n(h,y)\bigr)_{y\in S}
		\le
		B\mathbf 1 .
	\]
	By monotonicity and the constant-vector normalization,
	\[
		0
		\le
		\rho_{\bar x(h)}
		\left(
			\bigl(V_x^n(h,y)\bigr)_{y\in S}
		\right)
		\le
		B .
	\]
	Therefore
	\[
		0
		\le
		V_x^{n+1}(h)
		\le
		\bar u+\beta B
		=
		B .
	\]
	Thus \(0\le V_x^n(h)\le B\) for every \(n\) and \(h\).
	
	Define
	\[
		V_x(h):=\lim_{n\to\infty}V_x^n(h).
	\]
	The limit exists by monotone convergence. Since \(S\) is finite,
	\[
		\bigl(V_x^n(h,y)\bigr)_{y\in S}
		\to
		\bigl(V_x(h,y)\bigr)_{y\in S}.
	\]
	Continuity of \(\rho_{\bar x(h)}\) then permits passage to the limit in
	\eqref{eq:finite-envelope-proof}, yielding
	\[
		V_x(h)
		=
		\max\left\{
			0,\,
			u(\bar x(h))
			+
			\beta
			\rho_{\bar x(h)}
			\left(
				\bigl(V_x(h,y)\bigr)_{y\in S}
			\right)
		\right\}.
	\]
	
	Minimality follows from the same iteration. Let \(W:H(x)\to\R_+\) be any
	bounded nonnegative solution of \eqref{eq:history-snell}. Since
	\(V_x^0=0\le W\), monotonicity of the recursion implies \(V_x^n\le W\) for
	every \(n\). Taking limits gives \(V_x\le W\). Hence \(V_x\) is the minimal
	bounded nonnegative solution.
	
	It remains to identify the stopping supremum. For a continuation set
	\(C\subseteq H(x)\), let
	\[
		C^{(n)}:=C\cap\bigcup_{t=0}^{n-1}H_t(x)
	\]
	be the rule that follows \(C\) for the first \(n\) local dates and then forces
	stopping. Each \(C^{(n)}\) is prehistory closed and finite-depth. Therefore its
	value is bounded above by \(V_x^n((x))\), and hence by \(V_x((x))\). Conversely,
	for each \(n\), \(V_x^n((x))\) is the supremum over all finite-depth
	prehistory closed continuation sets with depth at most \(n\). Hence
	\[
		\sup_{\tau\le\infty}\mathcal U_x(\tau)
		=
		\sup_{n<\infty}V_x^n((x))
		=
		V_x((x)),
	\]
	where the completed value of an unbounded local stopping rule is understood as
	the limit of the values of its finite-depth truncations.
\end{proof}

\section{Proof of the Common-Tail Characterization}\label{app:cash}

\begin{proof}[Proof of Proposition~\ref{prop:cash}]
	Fix $x$ and suppose R7 holds. Let $\ell,m,\bar\ell,\bar m$ and $(r_y)_{y\in S}$ satisfy \eqref{eq:R7-premise}. Define
	\[
	a^y:=\Gamma_y(\bar m,r_y),
	\qquad
	b^y:=\Gamma_y(\bar\ell,r_y),
	\qquad
	v_y:=U_y(a^y),
	\qquad
	c:=u(\bar\ell)-u(\bar m).
	\]
	By the deterministic representation,
	\[
	d_x((\ell,\bar m))\sim_x d_x((m,\bar\ell))
	\quad\Longrightarrow\quad
	u(\ell)+\beta u(\bar m)=u(m)+\beta u(\bar\ell).
	\]
	Thus
	\[
	u(\ell)=u(m)+\beta c.
	\]
	
	Now apply the recursive representation at each next state $y$. Since $a^y$ and $b^y$ differ only by replacing the one-period consequence $\bar m$ with $\bar\ell$ before the same continuation profile $r_y$,
	\[
	U_y(b^y)
	=
	u(\bar\ell)+\beta\rho_y\bigl((U_z(r_y^z))_{z\in S}\bigr)
	=
	U_y(a^y)+c
	=
	v_y+c.
	\]
	Therefore the two schedules in \eqref{eq:R7-conclusion} have recursive values
	\[
	u(\ell)+\beta\rho_x(v)
	\qquad\text{and}\qquad
	u(m)+\beta\rho_x(v+c\1).
	\]
	R7 says that these values are equal. Substituting $u(\ell)=u(m)+\beta c$ gives
	\[
	u(m)+\beta c+\beta\rho_x(v)
	=
	u(m)+\beta\rho_x(v+c\1),
	\]
	and hence
	\[
	\rho_x(v+c\1)=\rho_x(v)+c.
	\]
	This is \eqref{eq:cash-attainable}.
	
	Conversely, suppose cash additivity holds for every attainable common scalar shift. Whenever \eqref{eq:R7-premise} holds, the same calculation gives $u(\ell)=u(m)+\beta c$, and the two composite schedules in \eqref{eq:R7-conclusion} have values
	\[
	u(m)+\beta c+\beta\rho_x(v)
	\qquad\text{and}\qquad
	u(m)+\beta\rho_x(v+c\1).
	\]
	Cash additivity makes these values equal. Hence the two composite schedules are indifferent, which is R7. If the attainable continuation domain is closed under all common translations $v\mapsto v+c\1$, the attainable-shift identity is exactly full cash additivity on that domain.
\end{proof}

\section{Proof of the Bandit Representation and Index Theorem}\label{app:bandit-index}

\begin{proof}[Proof of Theorem~\ref{thm:bandit-index}]
	The proof has three steps. The first constructs the product Bellman equation.
	The second constructs the charged one-schedule index and proves nestedness. The
	third verifies the max-index rule using the prevailing-charge form of the
	Weber--Whittle argument.

	First, consider the product-state Bellman equation. At a product state
	\[
	x=(x_1,\ldots,x_N)\in\mathbf S,
	\]
	a feasible calendar step chooses exactly one local clock to advance. If schedule
	\(i\) is advanced, the product state can only move to a state of the form
	\[
	(x_{-i},y_i)
	=
	(x_1,\ldots,x_{i-1},y_i,x_{i+1},\ldots,x_N),
	\qquad y_i\in S_i.
	\]
	All other coordinates are frozen. Therefore the continuation vector seen by
	schedule \(i\) is
	\[
	\bigl(V(x_{-i},y_i)\bigr)_{y_i\in S_i}.
	\]
	The one-schedule recursive representation gives current utility \(u(x_i)\) and
	evaluates this vector through \(\rho_i^{x_i}\). Maximizing over the unique local
	clock to be advanced gives exactly
	\[
	V(x)
	=
	\max_{1\le i\le N}
	\left\{
	u(x_i)
	+
	\beta
	\rho_i^{x_i}
	\left(
	\bigl(V(x_{-i},y_i)\bigr)_{y_i\in S_i}
	\right)
	\right\}.
	\]

	The same assumptions give well posedness. Since \(\rho_i^{x_i}\) is increasing
	and cash additive, it is nonexpansive in the sup norm. Indeed, if
	\(\|v-w\|_\infty\le a\), then
	\[
	v\le w+a\1
	\qquad\text{and}\qquad
	w\le v+a\1 .
	\]
	Monotonicity and cash additivity imply
	\[
	\rho_i^{x_i}(v)\le \rho_i^{x_i}(w)+a
	\qquad\text{and}\qquad
	\rho_i^{x_i}(w)\le \rho_i^{x_i}(v)+a.
	\]
	Thus
	\[
	|\rho_i^{x_i}(v)-\rho_i^{x_i}(w)|\le \|v-w\|_\infty .
	\]
	Because \(\beta\in(0,1)\), the product Bellman operator is a contraction on
	bounded functions on \(\mathbf S\). Hence \eqref{eq:bandit-bellman} has a
	unique bounded fixed point, which is the value of the elapsed-calendar
	constrained multi-schedule problem.

	Second, fix a schedule \(i\) and a scalar charge \(\lambda\). Define the excess
	value
	\[
	F_i^\lambda(x_i)
	:=
	V_i^\lambda(x_i)-\frac{\lambda}{1-\beta}.
	\]
	Using cash additivity in \eqref{eq:charged-stopping-value},
	\[
	\rho_i^{x_i}
	\left(
	\bigl(V_i^\lambda(y_i)\bigr)_{y_i\in S_i}
	\right)
	=
	\rho_i^{x_i}
	\left(
	\bigl(F_i^\lambda(y_i)\bigr)_{y_i\in S_i}
	+
	\frac{\lambda}{1-\beta}\1
	\right)
	=
	\rho_i^{x_i}
	\left(
	\bigl(F_i^\lambda(y_i)\bigr)_{y_i\in S_i}
	\right)
	+
	\frac{\lambda}{1-\beta}.
	\]
	Subtracting \(\lambda/(1-\beta)\) from both sides of
	\eqref{eq:charged-stopping-value} gives
	\begin{equation}\label{eq:appendix-excess}
	F_i^\lambda(x_i)
	=
	\max\left\{
	0,\,
	u(x_i)-\lambda
	+
	\beta
	\rho_i^{x_i}
	\left(
	\bigl(F_i^\lambda(y_i)\bigr)_{y_i\in S_i}
	\right)
	\right\}.
	\end{equation}
	Thus the charged on-hold problem is equivalently a charged stopping problem with
	stopping payoff zero and active payoff reduced by the scalar charge \(\lambda\).

	Let \(\mathcal T_i^\lambda\) denote the operator on the right-hand side of
	\eqref{eq:appendix-excess}. The nonexpansiveness argument above implies that
	\(\mathcal T_i^\lambda\) is a contraction. Hence \(F_i^\lambda\) is its unique
	fixed point. If \(\lambda'\ge\lambda\), then
	\[
	(\mathcal T_i^{\lambda'} f)(x_i)
	\le
	(\mathcal T_i^\lambda f)(x_i)
	\qquad
	\text{for every }f\text{ and every }x_i.
	\]
	By the fixed-point comparison theorem for monotone contractions,
	\[
	F_i^{\lambda'}(x_i)\le F_i^\lambda(x_i)
	\qquad
	\text{for every }x_i.
	\]
	Since each \(F_i^\lambda\) is nonnegative by \eqref{eq:appendix-excess}, the
	on-hold set can be written as
	\[
	D_i(\lambda)
	=
	\{x_i:F_i^\lambda(x_i)=0\}.
	\]
	Therefore,
	\[
	\lambda'\ge\lambda
	\quad\Longrightarrow\quad
	D_i(\lambda)\subseteq D_i(\lambda'),
	\]
	which is the nestedness assertion in the theorem.

	The endpoint and boundary properties follow from the same contraction argument.
	Because \(S_i\) is finite and the fixed point of a contraction depends
	continuously on parameters, \(F_i^\lambda(x_i)\) is continuous in \(\lambda\) for
	each \(x_i\). For \(\lambda\) sufficiently high, \(F\equiv0\) is the fixed point
	of \eqref{eq:appendix-excess}, so \(D_i(\lambda)=S_i\). For \(\lambda\)
	sufficiently low, one has \(F_i^\lambda(x_i)>0\) for every \(x_i\), so
	\(D_i(\lambda)=\varnothing\). Hence
	\[
	\Lambda_i(x_i)
	=
	\inf\{\lambda:x_i\in D_i(\lambda)\}
	\]
	is finite and well defined. Moreover, at \(\lambda=\Lambda_i(x_i)\), the active
	and on-hold branches in \eqref{eq:appendix-excess} are both optimal. Indeed,
	continuity gives \(F_i^{\Lambda_i(x_i)}(x_i)=0\); if the active branch were
	strictly below zero at this charge, it would remain below zero for nearby lower
	charges, contradicting the definition of \(\Lambda_i(x_i)\).

	Third, we verify the max-index rule. First record the equivalent passive-subsidy
	form of the same one-schedule problem. For each \(i\), write
	\[
	A_i^\lambda(x_i)
	:=
	u(x_i)-\lambda
	+
	\beta
	\rho_i^{x_i}
	\left(
	\bigl(F_i^\lambda(y_i)\bigr)_{y_i\in S_i}
	\right).
	\]
	Equation \eqref{eq:appendix-excess} says
	\[
	F_i^\lambda(x_i)=\max\{0,A_i^\lambda(x_i)\}.
	\]
	Since \(F_i^\lambda\ge0\), this is equivalent to
	\begin{equation}\label{eq:appendix-passive-excess}
	F_i^\lambda(x_i)
	=
	\max\left\{
	\beta F_i^\lambda(x_i),\,
	u(x_i)-\lambda
	+
	\beta
	\rho_i^{x_i}
	\left(
	\bigl(F_i^\lambda(y_i)\bigr)_{y_i\in S_i}
	\right)
	\right\}.
	\end{equation}
	Adding \(\lambda/(1-\beta)\) to \eqref{eq:appendix-passive-excess} gives
	\begin{equation}\label{eq:appendix-passive-subsidy}
	V_i^\lambda(x_i)
	=
	\max\left\{
	\lambda+\beta V_i^\lambda(x_i),\,
	u(x_i)
	+
	\beta
	\rho_i^{x_i}
	\left(
	\bigl(V_i^\lambda(y_i)\bigr)_{y_i\in S_i}
	\right)
	\right\}.
	\end{equation}
	Thus \(V_i^\lambda\) is also the one-schedule value when passivity pays the
	scalar subsidy \(\lambda\) and leaves the rested state fixed. The index
	\(\Lambda_i(x_i)\) is therefore the critical subsidy at which schedule \(i\), in
	state \(x_i\), first becomes willing to remain passive.

	For any fixed charge \(\lambda\), define the separated charged upper bound
	\begin{equation}\label{eq:appendix-charged-bound}
	B^\lambda(x)
	:=
	\frac{\lambda}{1-\beta}
	+
	\sum_{i=1}^N F_i^\lambda(x_i).
	\end{equation}
	This is the value obtained after pricing one unit of elapsed local time at
	\(\lambda\) and then allowing the one-schedule charged stopping problems to
	separate. It is an upper bound on the original elapsed-calendar value. To see
	this directly, fix any product state \(z\in\mathbf S\) and any schedule \(k\).
	Since
	\[
	B^\lambda(z_{-k},y_k)
	=
	\frac{\lambda}{1-\beta}
	+
	\sum_{\ell\ne k}F_\ell^\lambda(z_\ell)
	+
	F_k^\lambda(y_k),
	\]
	cash additivity and \eqref{eq:appendix-excess} imply
	\begin{align*}
	&u(z_k)
	+
	\beta
	\rho_k^{z_k}
	\left(
	\bigl(B^\lambda(z_{-k},y_k)\bigr)_{y_k\in S_k}
	\right) \\
	&\qquad =
	u(z_k)
	+
	\beta
	\rho_k^{z_k}
	\left(
	\bigl(F_k^\lambda(y_k)\bigr)_{y_k\in S_k}
	\right)
	+
	\beta
	\left[
	\frac{\lambda}{1-\beta}
	+
	\sum_{\ell\ne k}F_\ell^\lambda(z_\ell)
	\right] \\
	&\qquad \le
	F_k^\lambda(z_k)+\lambda
	+
	\beta
	\left[
	\frac{\lambda}{1-\beta}
	+
	\sum_{\ell\ne k}F_\ell^\lambda(z_\ell)
	\right] \\
	&\qquad =
	\frac{\lambda}{1-\beta}
	+
	F_k^\lambda(z_k)
	+
	\beta\sum_{\ell\ne k}F_\ell^\lambda(z_\ell) \\
	&\qquad \le
	\frac{\lambda}{1-\beta}
	+
	\sum_{\ell=1}^N F_\ell^\lambda(z_\ell)
	=
	B^\lambda(z).
	\end{align*}
	Thus \(B^\lambda\) is a Bellman supersolution for the original product problem,
	and therefore upper-bounds the value of every original feasible policy.

	Now fix a product state \(x\). Suppose first that the largest current index is
	unique. Let \(j\) be the unique schedule with
	\[
	\Lambda_j(x_j)>\max_{i\ne j}\Lambda_i(x_i),
	\]
	and choose \(\lambda\) strictly between the largest and second-largest current
	indices. Then
	\[
	F_i^\lambda(x_i)=0\quad\text{for every }i\ne j,
	\qquad
	F_j^\lambda(x_j)>0.
	\]
	For schedule \(j\), the active branch binds in \eqref{eq:appendix-excess}:
	\[
	F_j^\lambda(x_j)
	=
	u(x_j)-\lambda
	+
	\beta\rho_j^{x_j}
	\left(
	\bigl(F_j^\lambda(y_j)\bigr)_{y_j\in S_j}
	\right).
	\]
	Using cash additivity and the fact that all non-\(j\) excess values at the
	current state are zero,
	\begin{align*}
	&u(x_j)
	+
	\beta
	\rho_j^{x_j}
	\left(
	\bigl(B^\lambda(x_{-j},y_j)\bigr)_{y_j\in S_j}
	\right) \\
	&\qquad =
	u(x_j)
	+
	\beta
	\rho_j^{x_j}
	\left(
	\bigl(F_j^\lambda(y_j)\bigr)_{y_j\in S_j}
	+
	\frac{\lambda}{1-\beta}\1
	\right) \\
	&\qquad =
	u(x_j)
	+
	\beta
	\rho_j^{x_j}
	\left(
	\bigl(F_j^\lambda(y_j)\bigr)_{y_j\in S_j}
	\right)
	+
	\frac{\beta\lambda}{1-\beta} \\
	&\qquad =
	F_j^\lambda(x_j)+\lambda+
	\frac{\beta\lambda}{1-\beta}
	=
	F_j^\lambda(x_j)+\frac{\lambda}{1-\beta}
	=
	B^\lambda(x).
	\end{align*}
	Thus advancing the unique largest-index schedule attains the charged upper bound
	in the current Bellman comparison.

	For any \(k\ne j\), the on-hold branch binds weakly in
	\eqref{eq:appendix-excess}:
	\[
	0
	\ge
	u(x_k)-\lambda
	+
	\beta\rho_k^{x_k}
	\left(
	\bigl(F_k^\lambda(y_k)\bigr)_{y_k\in S_k}
	\right).
	\]
	Again using cash additivity,
	\begin{align*}
	&u(x_k)
	+
	\beta
	\rho_k^{x_k}
	\left(
	\bigl(B^\lambda(x_{-k},y_k)\bigr)_{y_k\in S_k}
	\right) \\
	&\qquad =
	u(x_k)
	+
	\beta
	\rho_k^{x_k}
	\left(
	\bigl(F_k^\lambda(y_k)\bigr)_{y_k\in S_k}
	+
	\left[
	\frac{\lambda}{1-\beta}
	+
	F_j^\lambda(x_j)
	\right]\1
	\right) \\
	&\qquad \le
	\lambda
	+
	\beta
	\left[
	\frac{\lambda}{1-\beta}
	+
	F_j^\lambda(x_j)
	\right] \\
	&\qquad =
	\frac{\lambda}{1-\beta}
	+
	\beta F_j^\lambda(x_j)
	\le
	\frac{\lambda}{1-\beta}
	+
	F_j^\lambda(x_j)
	=
	B^\lambda(x).
	\end{align*}
	Thus no lower-index schedule can exceed the branch value obtained by advancing
	\(j\).

	If the largest current index is tied, let
	\[
	\bar\lambda(x):=\max_{1\le i\le N}\Lambda_i(x_i)
	\]
	and choose any \(j\in\argmax_i\Lambda_i(x_i)\). At
	\(\lambda=\bar\lambda(x)\), every maximal schedule is at its boundary, so the
	active and on-hold branches are both optimal for those schedules; every
	nonmaximal schedule has the on-hold branch weakly optimal. The same calculation
	therefore gives equality for the selected maximal schedule \(j\) and weak
	inequality for all other schedules. Hence any fixed tie-breaking rule among
	maximal-index schedules is valid.

	This is the prevailing-charge form of the Weber--Whittle argument. The family of
	one-schedule subsidy problems is solved for all \(\lambda\). At a product state,
	the relevant charge is chosen from the current vector of critical charges: if the
	maximal index is unique, choose any charge between the largest and second-largest
	indices; if it is tied, use the common boundary charge and the fixed
	tie-breaking rule. After the selected schedule moves and the product state
	changes, the prevailing charge is recomputed from the new current indices.

	The selector constructed in this way advances exactly one schedule at every
	product state and leaves all other rested schedules fixed. It is therefore
	feasible for the original elapsed-calendar constrained problem. The separated
	charged problem is a relaxation, hence an upper bound, of the original problem;
	the prevailing-charge calculation above selects a relaxed optimizer that satisfies
	the original elapsed-calendar constraint. The relaxation is therefore tight, and
	the resulting original-feasible policy is optimal. This policy is precisely the
	max-index rule in \eqref{eq:max-index-policy}.
\end{proof}
\section{Proofs for Extensions and Special Cases}\label{app:special-cases}

Throughout this appendix, D+R gives the recursive stopping representation from
Theorem~\ref{thm:recursive}. Thus continuation families can be identified, on the
rich continuation domain, with their continuation-value vectors
\[
v=(U_y(a^y))_{y\in S}\in\R^S .
\]
The deterministic normalization fixes constants:
\begin{equation}\label{eq:app-constant-normalization}
	\rho_x(t\1)=t .
\end{equation}
Whenever R7 is imposed, Proposition~\ref{prop:cash} gives cash additivity:
\begin{equation}\label{eq:app-cash}
	\rho_x(v+c\1)=\rho_x(v)+c .
\end{equation}
Consequently, if a one-time retirement value is \(W/(1-\beta)\), the charged
one-arm stopping problem can be written in excess-value form as
\begin{equation}\label{eq:app-excess-template}
	F^W(x)
	=
	\max\left\{
	0,\,
	u(x)-W+\beta\rho_x(F^W)
	\right\}.
\end{equation}
The associated index is
\[
I(x)=\inf\{W:F^W(x)=0\}.
\]

\begin{proof}[Proof of Proposition~\ref{prop:eu-index}]
	Fix \(x\). Under D+R, the induced continuation preference \(\succeq_x^C\)
	is represented by \(\rho_x\) on continuation-value vectors. EU is the
	finite-state Anscombe--Aumann independence axiom applied to these continuation
	acts. Together with weak order, continuity, and monotonicity from the recursive
	one-step axioms, it implies that \(\succeq_x^C\) has an affine expected-utility
	representation:
	\[
	v\mapsto a_x\sum_{y\in S}\pi_x(y)v_y+b_x,
	\]
	where \(\pi_x(y)\ge0\) and the weights are not all zero. Normalizing the weights
	gives a probability \(p_x\in\Delta(S)\).
	
	Since \(\rho_x\) represents the same continuation preference, there is a
	strictly increasing transform \(\varphi_x\) such that
	\[
	\rho_x(v)
	=
	\varphi_x\left(\sum_{y\in S}p_x(y)v_y\right).
	\]
	The constant normalization \eqref{eq:app-constant-normalization} gives
	\[
	t
	=
	\rho_x(t\1)
	=
	\varphi_x(t),
	\]
	so \(\varphi_x\) is the identity on the relevant range. Hence
	\[
	\rho_x(v)=\sum_{y\in S}p_x(y)v_y.
	\]
	This proves the expected-utility representation. Conversely, a linear
	expectation over a Markov kernel satisfies continuation independence, and it is
	cash additive. Thus the expected-utility case is the sharp iff specialization
	of the recursive representation.
	
	Substituting the linear aggregator into \eqref{eq:app-excess-template} gives
	\eqref{eq:eu-excess-special}. For the ratio formula, fix
	\(\tau\in\mathfrak T_x^+\). Continuing until \(\tau\) and then retiring gives
	excess value
	\[
	\E_x^p\left[
	\sum_{t=0}^{\tau-1}\beta^t(u(X_t)-W)
	\right]
	=
	\E_x^p\left[
	\sum_{t=0}^{\tau-1}\beta^t u(X_t)
	\right]
	-
	W
	\E_x^p\left[
	\sum_{t=0}^{\tau-1}\beta^t
	\right].
	\]
	This continuation plan is weakly worthwhile exactly when \(W\) is no larger
	than the corresponding discounted reward-per-time ratio. Taking the supremum
	over positive finite stopping times gives \eqref{eq:eu-index-special}.
\end{proof}

\begin{proof}[Proof of Lemma~\ref{lem:finite-good-news-representation}]
	By D, R, and EU, the continuation aggregator at each active integer state
	\(n=1,\ldots,T\) is linear on the finite continuation domain. Hence there is a
	probability vector
	\[
		q_n\in\Delta(S_T^G)
	\]
	such that, for every continuation-value vector \(v\in\mathbb R^{S_T^G}\),
	\[
		\rho_n(v)=\sum_{y\in S_T^G}q_n(y)v(y).
	\]
	For \(A\subseteq S_T^G\), write
	\[
		q_n(A):=\sum_{y\in A}q_n(y).
	\]

	First, the state-ordering part of GNL identifies the unresolved states as
	strictly between the two benchmark states. Indeed, for any active integer state
	\(n\), the constant continuation family \([z]_n\) delivers the same consequence
	\(z\) at every next-state coordinate. Constant preservation of the continuation
	aggregator therefore gives the value \(u(z)\). Thus
	\[
		[\rho]_n\succ_n^C [j]_n\succ_n^C[\gamma]_n
	\]
	implies
	\[
		u(\rho)>u(j)>u(\gamma)
		\qquad
		\text{for every }j=0,\ldots,T.
	\]
	In particular \(u(\rho)>u(\gamma)\), so the normalized values
	\[
		\mu_j:=\frac{u(j)-u(\gamma)}{u(\rho)-u(\gamma)}
	\]
	are well defined and satisfy
	\[
		\mu_j\in(0,1),
		\qquad j=0,\ldots,T.
	\]

	Next, consider the exhaustiveness condition. The continuation family
	\[
		[\rho\text{ on }\{\rho,n-1\};\gamma\text{ off }\{\rho,n-1\}]_n
	\]
	has represented continuation value
	\[
		q_n(\{\rho,n-1\})u(\rho)
		+
		\bigl(1-q_n(\{\rho,n-1\})\bigr)u(\gamma).
	\]
	The constant continuation family \([\rho]_n\) has value \(u(\rho)\). The
	indifference in \eqref{eq:gnl-exhaustiveness} therefore gives
	\[
		q_n(\{\rho,n-1\})u(\rho)
		+
		\bigl(1-q_n(\{\rho,n-1\})\bigr)u(\gamma)
		=
		u(\rho).
	\]
	Since \(u(\rho)>u(\gamma)\), this implies
	\[
		q_n(\{\rho,n-1\})=1.
	\]
	Hence, under the represented one-step law at \(n\), all weight is placed on the
	two coordinates \(\rho\) and \(n-1\).

	Now consider the stationary relative-price condition. The good-news branch bet
	\[
		[\rho\text{ on }\{\rho\};\gamma\text{ off }\{\rho\}]_n
	\]
	has represented value
	\[
		q_n(\rho)u(\rho)+\bigl(1-q_n(\rho)\bigr)u(\gamma).
	\]
	The mixed constant continuation family
	\[
		\alpha[n]_n+(1-\alpha)[\gamma]_n
	\]
	has represented value
	\[
		\alpha u(n)+(1-\alpha)u(\gamma).
	\]
	Therefore
	\[
		[\rho\text{ on }\{\rho\};\gamma\text{ off }\{\rho\}]_n
		\succeq_n^C
		\alpha[n]_n+(1-\alpha)[\gamma]_n
	\]
	is equivalent to
	\[
		q_n(\rho)\bigl(u(\rho)-u(\gamma)\bigr)
		\ge
		\alpha\bigl(u(n)-u(\gamma)\bigr),
	\]
	or, equivalently,
	\[
		q_n(\rho)\ge \alpha\mu_n.
	\]
	The strict comparison
	\[
		[n]_n
		\succ_n^C
		[\rho\text{ on }\{\rho\};\gamma\text{ off }\{\rho\}]_n
		\succ_n^C
		[\gamma]_n
	\]
	implies
	\[
		0<q_n(\rho)<\mu_n.
	\]
	Hence \(q_n(\rho)/\mu_n\in(0,1)\).

	The equivalence in \eqref{eq:gnl-relative-stationarity}, applied for all
	\(\alpha\in(0,1)\), implies that the threshold
	\[
		\frac{q_n(\rho)}{\mu_n}
	\]
	is the same at every active integer state. Otherwise, if
	\(q_n(\rho)/\mu_n\ne q_m(\rho)/\mu_m\), one could choose
	\(\alpha\in(0,1)\) strictly between the two ratios and violate the stated
	equivalence. Let the common value be
	\[
		\theta\in(0,1).
	\]
	Then, for every \(n=1,\ldots,T\),
	\[
		q_n(\rho)=\theta\mu_n.
	\]
	Using \(q_n(\{\rho,n-1\})=1\), we also obtain
	\[
		q_n(n-1)=1-\theta\mu_n.
	\]

	It remains to derive the no-news update. By successor-consequence consistency,
	\[
		[n]_n
		\sim_n^C
		[\rho\text{ on }\{\rho\};\, n-1\text{ on }\{n-1\};\, \gamma
		\text{ off }\{\rho,n-1\}]_n .
	\]
	The left-hand side has value \(u(n)\). Since the represented law places all
	weight on \(\{\rho,n-1\}\), the right-hand side has value
	\[
		q_n(\rho)u(\rho)+q_n(n-1)u(n-1).
	\]
	Thus
	\[
		u(n)
		=
		q_n(\rho)u(\rho)+q_n(n-1)u(n-1).
	\]
	Subtract \(u(\gamma)\) from both sides and divide by
	\(u(\rho)-u(\gamma)\). Using the definitions of \(\mu_n\) and
	\(\mu_{n-1}\), this gives
	\[
		\mu_n
		=
		q_n(\rho)+(1-q_n(\rho))\mu_{n-1}.
	\]
	Substituting \(q_n(\rho)=\theta\mu_n\), we get
	\[
		\mu_n
		=
		\theta\mu_n+(1-\theta\mu_n)\mu_{n-1}.
	\]
	Solving for \(\mu_{n-1}\) yields
	\[
		\mu_{n-1}
		=
		\frac{(1-\theta)\mu_n}{1-\theta\mu_n},
	\]
	which is \eqref{eq:gnl-no-news-update}.

	Equivalently,
	\[
		\frac{\mu_{n-1}}{1-\mu_{n-1}}
		=
		(1-\theta)\frac{\mu_n}{1-\mu_n}.
	\]
	Taking logarithms gives
	\[
		\log\frac{\mu_{n-1}}{1-\mu_{n-1}}
		=
		\log\frac{\mu_n}{1-\mu_n}
		+
		\log(1-\theta),
	\]
	which is \eqref{eq:gnl-log-odds-update}. Hence, if
	\[
		r_n:=\log\frac{\mu_n}{1-\mu_n},
	\]
	then
	\[
		r_{n-1}=r_n+\log(1-\theta).
	\]
	Writing
	\[
		\delta:=-\log(1-\theta)>0,
	\]
	we have
	\[
		r_n=a+n\delta
	\]
	for the constant \(a=r_0\). Thus the primitive integer labels are represented as
	an affine log-likelihood scale, and the represented one-step law has the
	stationary good-news form
	\[
		q_n(\rho)=\theta\mu_n,
		\qquad
		q_n(n-1)=1-\theta\mu_n.
	\]
	This is exactly the finite good-news process in which good news arrives with
	probability \(\theta\) conditional on the good type and never arrives
	conditional on the bad type.
\end{proof}

\begin{proof}[Proof of Proposition~\ref{prop:variational-index}]
	Fix \(x\). By R7 and Proposition~\ref{prop:cash}, \(\rho_x\) is cash additive.
	VA says that if two continuation-value vectors are indifferent, then their
	objective mixture is weakly preferred. Thus \(\rho_x\) is quasiconcave. Cash
	additivity turns quasiconcavity into concavity. Let
	\[
	a=\rho_x(v),
	\qquad
	b=\rho_x(w).
	\]
	Then
	\[
	\rho_x(v-a\1)=0=\rho_x(w-b\1).
	\]
	By VA,
	\[
	\rho_x\bigl(\alpha(v-a\1)+(1-\alpha)(w-b\1)\bigr)\ge0.
	\]
	Using cash additivity,
	\[
	\rho_x(\alpha v+(1-\alpha)w)
	\ge
	\alpha\rho_x(v)+(1-\alpha)\rho_x(w).
	\]
	Thus \(\rho_x\) is concave, monotone, continuous, cash additive, and normalized.
	
	Let \(\phi_x(v):=-\rho_x(v)\). Then \(\phi_x\) is convex, continuous,
	decreasing, and satisfies
	\[
	\phi_x(v+c\1)=\phi_x(v)-c.
	\]
	By finite-dimensional convex duality,
	\[
	\phi_x(v)
	=
	\sup_{q\in\R^S}\{q\cdot v-\phi_x^*(q)\}.
	\]
	If \(\phi_x^*(q)<\infty\), cash additivity forces \(q\cdot\1=-1\), while
	monotonicity forces \(q\le0\). Hence \(q=-p\) for some \(p\in\Delta(S)\).
	Define
	\[
	\alpha_x(p):=\phi_x^*(-p).
	\]
	Then
	\[
	\rho_x(v)
	=
	\min_{p\in\Delta(S)}\{p\cdot v+\alpha_x(p)\}.
	\]
	The penalty is convex and lower semicontinuous by construction, and
	\(\rho_x(0)=0\) implies \(\min_p\alpha_x(p)=0\).
	
	Conversely, any aggregator of the displayed variational form is monotone,
	cash additive, and concave:
	\[
	\min_p\{p\cdot(v+c\1)+\alpha_x(p)\}
	=
	c+\min_p\{p\cdot v+\alpha_x(p)\}.
	\]
	Cash additivity gives R7 through Proposition~\ref{prop:cash}, and concavity
	gives VA. Thus the variational representation is the iff form of D+R+R7+VA
	on the rich continuation domain.
	
	Substituting the variational aggregator into \eqref{eq:app-excess-template}
	gives \eqref{eq:variational-excess-special}. For the ratio expression, fix
	\(\tau\in\mathfrak T_x^+\). Under the rectangular interpretation of one-step
	penalties, continuing until \(\tau\) gives excess value
	\[
	\inf_{\mathbb P}
	\E_{\mathbb P}\left[
	\sum_{t=0}^{\tau-1}\beta^t(u(X_t)-W)
	+
	\sum_{t=0}^{\tau-1}\beta^{t+1}\alpha_{X_t}(p_t)
	\right].
	\]
	The \(W\)-terms separate as
	\[
	-W
	\E_{\mathbb P}\left[
	\sum_{t=0}^{\tau-1}\beta^t
	\right].
	\]
	Thus the plan is weakly worthwhile exactly when \(W\) is no larger than the
	penalized ratio in \eqref{eq:variational-index-special}. Taking the supremum
	over stopping times gives the variational index.
\end{proof}

\begin{proof}[Proof of Proposition~\ref{prop:choquet-index}]
	Fix \(x\). Under D+R, continuation schedules are represented on
	continuation-value vectors by \(\rho_x\). CEU imposes independence only on
	comonotonic continuation families. By the finite-state Schmeidler representation
	theorem, continuity, monotonicity, and comonotonic independence imply that
	\(\succeq_x^C\) is represented by a Choquet integral with respect to a
	normalized capacity \(\nu_x\):
	\[
	\rho_x(v)=\int v\,d\nu_x.
	\]
	The deterministic normalization pins down constants, so the integral is on the
	same utility scale fixed by D+R.
	
	Conversely, a finite-state Choquet integral with respect to a normalized
	capacity is continuous, monotone, and satisfies comonotonic independence. Thus
	the Choquet representation is the iff form of the CEU restriction on the
	continuation domain.
	
	A normalized Choquet integral is cash additive:
	\[
	\int (v+c\1)\,d\nu_x
	=
	\int v\,d\nu_x+c.
	\]
	Substituting the Choquet aggregator into \eqref{eq:app-excess-template} gives
	\eqref{eq:choquet-excess-special}, and the index is
	\[
	I^{CH}(x)=\inf\{W:F^W(x)=0\}.
	\]
	If \(\nu_x\) is additive, then
	\[
	\int v\,d\nu_x=\sum_{y\in S}\nu_x(\{y\})v_y,
	\]
	so the model reduces to the expected-utility case.
\end{proof}

\begin{proof}[Proof of Proposition~\ref{prop:pandora-index}]
	Fix an unopened box \(i\). If the outside continuation value is \(b\), opening
	the box gives
	\[
	H_i(b):=-c_i+\beta\rho_i\bigl((b\vee g_i(z))_{z\in Z_i}\bigr).
	\]
	A reservation value is a fixed point \(b=H_i(b)\). Since \(\rho_i\) is monotone
	and cash additive, it is nonexpansive in the sup norm: if
	\(\|v-w\|_\infty\le a\), then
	\[
	v\le w+a\1
	\quad\text{and}\quad
	w\le v+a\1,
	\]
	so monotonicity and cash additivity imply
	\[
	|\rho_i(v)-\rho_i(w)|\le a.
	\]
	The map \(b\mapsto (b\vee g_i(z))_{z\in Z_i}\) is one-Lipschitz, so \(H_i\) is
	\(\beta\)-Lipschitz. Since \(\beta<1\), \(H_i\) has a unique fixed point
	\(\pi_i\), which satisfies \eqref{eq:pandora-reservation}.
	
	Cash additivity gives the equivalent excess form. Since
	\[
	\pi_i\vee g_i(z)
	=
	\pi_i+(g_i(z)-\pi_i)^+,
	\]
	we have
	\[
	\rho_i\bigl((\pi_i\vee g_i(z))_{z\in Z_i}\bigr)
	=
	\pi_i+\rho_i\bigl(((g_i(z)-\pi_i)^+)_{z\in Z_i}\bigr).
	\]
	Substituting this into \eqref{eq:pandora-reservation} gives
	\[
	\pi_i
	=
	-c_i+\beta\pi_i
	+
	\beta\rho_i\bigl(((g_i(z)-\pi_i)^+)_{z\in Z_i}\bigr),
	\]
	which is equivalent to \eqref{eq:pandora-reservation-cash}.
	
	For a revealed state \(z\), the local state is absorbing, so using the revealed
	box forever gives value
	\[
	g_i(z)=\frac{u(z)}{1-\beta}.
	\]
	The corresponding per-period index is \(u(z)=(1-\beta)g_i(z)\). At the
	unopened state, the one-time reservation value is \(\pi_i\), so the per-period
	index is \((1-\beta)\pi_i\). This proves \eqref{eq:pandora-index-special}.
	
	It remains to verify the Pandora rule. Let \(G\) be the best currently revealed
	absorbing value. Opening box \(i\) is worthwhile against outside value \(G\)
	exactly when
	\[
	-c_i+\beta\rho_i\bigl((G\vee g_i(z))_{z\in Z_i}\bigr)\ge G.
	\]
	The left-hand side minus \(G\) is weakly decreasing in \(G\) and is equal to
	zero at \(G=\pi_i\). Hence box \(i\) is worth opening exactly when
	\(G\le\pi_i\), with strict preference when \(G<\pi_i\). Therefore, if the best
	revealed value is below the largest reservation value, the optimal action is to
	open a box with maximal \(\pi_i\). If the best revealed value is at least as
	large as every reservation value, the optimal action is to commit to the best
	revealed absorbing state. Backward induction on the number of unopened boxes
	gives the stated rule.
	
	Under subjective expected utility,
	\[
	\rho_i(v)=\sum_{z\in Z_i}p_i(z)v(z),
	\]
	so \eqref{eq:pandora-reservation-cash} becomes
	\eqref{eq:pandora-seu-reservation-special}. With \(g_i(z)=z\) and the formal
	boundary case \(\beta=1\), this reduces to Weitzman's reservation equation.
\end{proof}

\

\bibliographystyle{plainnat}
\bibliography{bandit_literature}

\end{document}